\theoremstyle{plain}
\newtheorem{theorem}{Theorem}
\newtheorem{lemma}{Lemma}
\newtheorem{proof}{Proof}
\newtheorem{definition}{Definition}
\newtheorem{example}{Example}
\newcommand{\ot}{\ensuremath{\leftarrow}}
\newcommand{\Omit}[1]{}
\newcounter{Codeline}
\newcommand{\Newcodeline}{\setcounter{Codeline}{1}}
\newcommand{\Cl}{{\theCodeline}: \addtocounter{Codeline}{1}}
\newcommand{\crm}{\\}
\newcommand{\trans}[1]{\ensuremath{\stackrel{#1}{\longrightarrow}}}
\newcommand{\qed}{\hfill $\Box$}
\newcommand{\Scomment}[1]{\ensuremath{/\ast} #1 \ensuremath{\ast/}}
\newcommand{\Real}{\mathbb{R}}
\begin{document}

\begin{titlepage}
\vspace*{10mm}
\begin{center}
\begin{minipage}{170mm}
\begin{center}
{\LARGE The BG-simulation for Byzantine Mobile Robots}\footnotemark[1] \\
\end{center} 
\end{minipage} \\
\vspace*{10mm}
Taisuke Izumi\footnotemark[2]\footnotemark[4] \hspace{3mm} 
Zohir Bouzid\footnotemark[3] \hspace{3mm}
S\'{e}bastien Tixeuil\footnotemark[3] \hspace{3mm}
Koichi Wada\footnotemark[2]
\end{center}

\footnotetext[1]{This work was supported in part by ANR projects,
KAKENHI no.21500013 and no.22700010.}
\footnotetext[2]{Nagoya Institute of Technology, Gokiso-cho, 
Showa-ku, Nagoya, Aichi, 466-8555, Japan. E-mail: \{t-izumi,wada\}@nitech.ac.jp.}
\footnotetext[3]{Universit\'{e} Pierre et Marie Curie - Paris 6, LIP6 CNRS 7606, 
France. E-mail: zohir.bouzid@gmail.com, Sebastien.Tixeuil@lip6.fr.}
\footnotetext[4]{Corresponding Author. E-mail: t-izumi@nitech.ac.jp. 
Tel:+81-52-735-5567, Fax:+81-52-735-5408}

\begin{abstract}
This paper investigates the task solvability of mobile robot systems subject 
to Byzantine faults. We first consider the \emph{gathering} problem, which 
requires all robots to meet in finite time at a non-predefined location. 
It is known that the solvability of Byzantine gathering strongly depends 
on a number of system attributes, such as synchrony, the number of Byzantine 
robots, scheduling strategy, obliviousness, orientation of local coordinate 
systems and so on. However, the complete characterization of the attributes 
making Byzantine gathering solvable still remains open.

In this paper,  we show strong impossibility results of Byzantine gathering. 
Namely, we prove that Byzantine gathering is impossible even if we assume 
one Byzantine fault, an atomic execution system, the $n$-bounded centralized 
scheduler, non-oblivious robots, instantaneous movements and a common 
orientation of local coordinate systems (where $n$ denote the number 
of correct robots). Those hypotheses are much 
weaker than used in previous work, inducing a much stronger impossibility 
result. 

At the core of our impossibility result is a reduction from the distributed 
consensus problem in asynchronous shared-memory systems. In more details, 
we newly construct a generic reduction scheme based on the distributed 
{\em BG-simulation}. Interestingly, because of its versatility, we can 
easily extend our impossibility result for general pattern formation 
problems. 



\end{abstract}
\end{titlepage}

\section{Introduction}

\paragraph{Motivation}

Robot networks have recently become a challenging research area for distributed 
computing researchers. At the core of scientific studies lies the 
characterization of the minimum robots capabilities that are necessary to 
achieve non-trivial tasks, such as the formation of geometric patterns, 
scattering, gathering, \emph{etc}. The considered robots are often very weak: 
They are anonymous (\emph{i.e.} that do not have any means to perform distinct 
tasks based on a distinguishable identifier), oblivious (\emph{i.e.} they cannot 
remember past observations, computations, or movements), disoriented (\emph{i.e.} 
they share neither a common coordinate system nor a common length unit), and 
most importantly dumb (\emph{i.e.} they don't have any explicit mean of 
communication). The last property means that robots cannot communicate 
explicitly by sending messages to one another. Instead, their communication is 
indirect (or spatial): a robot 'writes' a value to the network by moving toward 
a certain position, and a robot 'reads' the state of the network by observing 
the positions of other robots in terms of its {\em local coordinate system}.

The problem we consider in this paper is the \emph{gathering} of 
fault-prone robots~\cite{SY99}. Given a set of oblivious 
robots with arbitrary initial locations and no agreement on a 
global coordinate system, the gathering problem requires that 
all correct robots reach and stabilize the same, but unknown 
beforehand, location. A number of solvability issues about
the gathering problem are studied in previous works because 
of its fundamental importance in both theory and practice.
One can easily find an analogy of the gathering problem 
to the {\em consensus} problem, and thus may think that 
its solvability issue are straightforwardly deduced from the 
known results about the consensus solvability (e.g., FLP impossibility). 
However, many differences lies between those two problems and
the solvability of the gathering problem is still non-trivial.
We can enumerate at least three factors that strongly affect
the solvability of the gathering problem: \emph{(i)} the absence of a common 
coordinate system, \emph{(ii)} the fact that there is no explicit termination,
and \emph{(iii)} the lack of a validity requirement. In fault-free
 environments, the non-triviality of the existence of a solution mainly 
results from \emph{(i)} that hardens symmetry breaking. Actually, gathering 
is known to be impossible to solve with 
$n=2$ robots in atomic-execution (ATOM) models\footnote{While ATOM models 
are often called semi-synchronous models, we do not use that word
because this model actually has no bound for the processing/moving 
speed of each robot. We adapt the notion of bounded schedulers for
characterizing the bound for processing speed of robots, and thus 
apply the word ``asynchronous ATOM models'' to the conventional
semi-synchronous models.}. 
One direction of the study of gathering is to explore the weaker 
assumptions breaking this hardness. For example, 
endowing robots with a small amount of 
memory~\cite{SY99,BDPT10}, or weak agreement of local coordinate
systems~\cite{SDY06,IKIW07,ITIKW08}. 
On the other hand, in fault-prone environments, the remaining two factors
arise as the primary difference to the consensus in classical computation
models. An important witness encouraging the difference is 
that the gathering problem can be solved in a certain kind of crash-prone 
asynchronous robot networks~\cite{agmonP06,defago3274fta}, while 
the consensus cannot be solved under the asynchrony and one 
crash fault~\cite{fischer1985impossibility}.

\paragraph{Our Contribution}

In this paper, we investigate the solvability of the gathering problem
in robot networks subject to {\em Byzantine} faults. While
crash-faulty robots just stop the execution of the deployed algorithm, 
a Byzantine-faulty robot may execute arbitrary code (including malicious 
code) and try to defeat the proper operation of correct robots. 
As we mentioned, the solvability of Byzantine gathering is quite
non-trivial. Actually, the Byzantine-tolerant gathering problem still 
has the large gap between possibility and impossibility. As known 
results, Byzantine gathering is feasible only under very strong 
assumptions (fully-synchronous ATOM models or small number of robots)~\cite{agmonP06}, and also the impossibility results are 
proved only for severe
models (asynchrony, oblivious and uniform robots, and/or without agreement 
of coordinate systems)~\cite{agmonP06,defago3274fta}. 
Filling this gap has remained 
an open question until now. In this paper, we respond negatively: Namely, 
we prove that Byzantine gathering is impossible even if we assume an 
ATOM models, $n$-bounded centralized scheduler, non-oblivious and 
non-uniform robots, and a common orientation of local coordinate 
systems, for only one Byzantine robot (where $n$ denotes the number of 
correct robots). Those assumptions are much stronger than that shown 
in previous work, inducing a much stronger impossibility result.

At the core of our impossibility result is a reduction to 
$1$-Byzantine-resilient gathering in mobile robot systems from 
the distributed $1$-crash-resilient consensus problem in 
asynchronous shared-memory systems. In 
more details, based on the distributed BG-simulation by Borowsky 
and Gafni~\cite{borowsky1993generalized,BGLR01}, we newly 
construct a $1$-crash-resilient consensus algorithm using any $1$-Byzantine-resilient gathering algorithm on the system with several 
constraints. Thus, we can deduce impossibility results of 
Byzantine gathering for the model stated above. 
More interestingly, because of its versatility, we can easily extend our 
impossibility result for general pattern formation problems: We show 
that the impossibility also holds for a broad class of pattern 
formation problems including line and circle formation. To the best of our knowledge, this 
paper is the first study explicitly bridging algorithmic mobile robotics 
and conventional distributed computing theory for proving impossibility 
results.

It is remarkable that we assume a certain kind of synchrony assumption 
for robot systems. The assumption of $n$-bounded scheduler restricts 
the relative speed of each robot (formally, $n$-bounded scheduler 
only allows the activation schedules where each robot is activated 
at most $n$ times between any two consecutive activations of 
some robot). An interesting insight we can find from our result is that 
it is possible to trade the synchrony and Byzantine behavior of robot 
networks to the asynchrony and crash behavior of shared memory systems,
which implies that the gap between synchronous robot networks
and classical distributed computation models is as large as that
between synchrony and asynchrony in classical models. 

\paragraph{Related works}

Since the pioneering work of Suzuki and Yamashita~\cite{SY99}, the 
formation of a specific patterns, including the gathering and the 
convergence problems, by mobile robots has been addressed first 
in \emph{fault-free} systems for a broad class of settings. 
Prencipe~\cite{Pre05} studied the problem of gathering in
both atomic and non-atomic movement models, and showed that the problem 
is unsolvable without additional assumptions such as being able to detect 
the multiplicity of a location (\emph{i.e.}, knowing if there is more 
than one robot in a given location). Following their work, the gathering 
and the convergence problems were considered on several restricted settings 
such as with limited visibility~\cite{FPSW05,ando1999dmp}, and 
with inaccurate sensors and 
movements~\cite{SDY06,YIKIW09,IKIW07,ITIKW08,CohenP08}.

The case of \emph{fault-prone} robot networks was recently tackled by several 
academic studies. The faults that have been investigated fall in two 
categories: \emph{crash} faults and \emph{Byzantine} faults. The 
\emph{deterministic} fault-tolerant gathering is first 
addressed in~\cite{agmonP06} where the authors propose a gathering 
algorithm that tolerates one crash in ATOM models with arbitrary schedulers and
another algorithm working under the fully synchronous scheduling, which tolerates up to $f$ Byzantine faults for $n > 2f$ robot systems, where
$n$ is the number of correct robots. In \cite{defago3274fta}, the authors 
study the feasibility of \emph{probabilistic} gathering in crash-prone 
and Byzantine-prone environments. It also improves the impossibility of
Byzantine gathering, but the impossibility still relies on 
the weakness of models, including obliviousness and no agreement of 
coordinate systems. 

The convergence problem,
which is a variation of the gathering problem, was first addressed 
by Cohen and Peleg~\cite{CohenP08}, where algorithms based on convergence to 
the center of gravity of the system are presented. Those algorithms work in
non-atomic models with asynchronous schedulers.
The study of convergence in Byzantine-prone environments are addressed
by Bouzid \emph{et al.} A series of their papers~\cite{Bouzid2010,BGT09e} investigates
the relationship between the maximum number of faulty robots and the 
synchrony and the atomicity of robots. 

As impossibility results are hard to get, it is often interesting to start from 
a small set of such impossibility results and derive others through 
\emph{reduction}. The distributed {\em BG-simulation} lies as one of the 
powerful reduction schemes  in distributed computing. There are 
many applications of it with a variety of modified reduction
strategies~\cite{borowsky1993generalized,BGLR01,GafniGP05,Gafni09}.

\paragraph{Roadmap}
The organization of the paper is as follows: In section \ref{secPreliminary}, we 
explain the system model, including both robot and shared-memory models, and the 
problem definitions. Section \ref{secReduction} introduces our reduction scheme. 
To clarify the concept of our idea, this section shows a weaker version of the 
reduction, which is be extended and generalized  at Section 
\ref{secGeneralPattern}. 

\section{Preliminaries}
\label{secPreliminary}
\subsection{Asynchronous Shared Memory System}

We consider a single-writer multi-reader (SWMR) asynchronous shared memory 
system of $m_{S}$ processes $\{p_0, \cdots, p_{m_{S}-1}\}$. The 
shared memory consists of a number of memory cells, one of which can 
be atomically read and written by each process. That is, we assume 
{\em linearizable} shared memory. We also employ {\em atomic snapshot} 
for access to the shared memory. It provides atomic read of all shared 
memory cells. Since atomic snapshot operation can be implemented 
by only using read/write operations, it gives no additional 
computational power to the model. Any impossibility result on
asynchronous shared memory systems also holds even if we assume atomic 
snapshot operation. Since all of three operations to access the shared 
memory are performed atomically and instantaneously, we see it as an 
{\em event} in execution. To explain the order of events easily, we 
often use the notion of discrete global time. Each event is assigned 
the global timestamp of its occurrence. Since we assume linearizability, 
all events are consistently serialized. Thus, without loss of generality,
we assume any two events necessarily have different timestamps. 
Note that the global time is introduced only for ease of explanation, 
and no process is aware of the time.

Processes are subject to crash faults. When a process crashes,
it stops all of the following operations and becomes silent.
In this paper, we assume that only one process can be crashed. 

\subsection{Consensus Problem}

In a consensus algorithm, each correct process initially proposes a value, 
and eventually chooses a decision value from the values proposed 
by processes so that all processes decide the same value. 
The standard specification of the consensus problem assumes that the tree following properties are satisfied: \emph{(i)} {\bf Termination} Every correct process eventually decides, \emph{(ii)} {\bf Agreement} No two correct processes decide different values, and \emph{(iii)} {\bf Validity} If a process decides a value $v$, then, $v$ is a value proposed by a process.

Throughout this paper, we only consider the binary consensus,
where only value zero or one is the possible proposal.
It is well-known that the consensus problem is not solvable in
asynchronous shared-memory systems with one crash fault~\cite{LA87}.

\begin{theorem}[Impossibility of $1$-resilient Consensus] \label{thmImpConsensus}
There is no binary consensus algorithm on the asynchronous 
SWMR shared-memory model even if $m_{S} = 2$ and only one process can be crashed.
\end{theorem}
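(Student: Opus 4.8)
The plan is to prove this via the classical bivalency argument of Fischer, Lynch, and Paterson, adapted to the shared-memory setting with atomic snapshots, exactly as in the original proof by Loui and Abu-Amara~\cite{LA87}. First I would assume for contradiction that a binary consensus algorithm $\mathcal{A}$ exists for $m_S = 2$ processes tolerating one crash, and I would set up the standard terminology: a \emph{configuration} consists of the internal states of both processes together with the contents of the shared memory; a configuration is \emph{$v$-valent} if $v$ is the only decision value reachable in any extension, and \emph{bivalent} if both $0$ and $1$ remain reachable. The key structural fact to establish is the existence of an initial bivalent configuration: by Validity, the all-$0$ initial configuration is $0$-valent and the all-$1$ configuration is $1$-valent; since $m_S = 2$, these two differ only in the input of a single process, so if neither were bivalent there would be a single process $p$ whose input flips the valency, and running the \emph{other} process solo (which is a legal execution, since the first may have crashed) would have to decide a fixed value independent of $p$'s input, contradicting the valency flip. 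Hence some initial configuration $C_0$ is bivalent.

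Next I would carry out the induction that drives the contradiction: starting from $C_0$, I would show that from any bivalent configuration $C$ one can take a single step (an atomic read, write, or snapshot by one of the two processes) leading to another bivalent configuration. Suppose not: then from a bivalent $C$ both processes' next steps lead only to univalent configurations, and since $C$ is bivalent there must be a step $e_0$ by some process leading to a $0$-valent configuration and a step $e_1$ by some process leading to a $1$-valent configuration. A short case analysis on which processes own $e_0$ and $e_1$, and on the types of memory operations involved, yields a contradiction: if $e_0$ and $e_1$ are steps of the same process $p$, then running the other process $q$ solo from $C$ (legal, as $p$ may crash) must decide some fixed value, yet that value must be consistent with both $0$-valent and $1$-valent successors; if they belong to different processes $p$ and $q$, then the two steps commute unless they both access the same cell in a conflicting way, and in the commuting case $e_0 e_1 = e_1 e_0$ forces a single configuration that is simultaneously $0$-valent and $1$-valent, while in the conflicting case — where one of them is a write or the pair is a write/snapshot — one again isolates the writing process and runs the other solo to derive the contradiction. (The snapshot operation introduces no genuinely new case, since a snapshot commutes with any operation on a disjoint cell and is ``invisible'' to the writer, so it behaves like a read for the purposes of this argument.)

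Finally, combining the two pieces: since $C_0$ is bivalent and every bivalent configuration has a bivalent one-step successor, one constructs an infinite execution that stays bivalent forever, and by a standard fairness weaving one ensures every nonfaulty process takes infinitely many steps; in such an execution no process ever decides, violating Termination. This contradicts the assumed existence of $\mathcal{A}$, proving the theorem.

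I expect the main obstacle to be the case analysis in the ``critical configuration'' step — specifically, handling the interaction between atomic-snapshot operations and single-cell writes cleanly, and making sure the ``run one process solo'' arguments are genuinely licensed by the one-crash assumption. The subtlety is purely in bookkeeping rather than in any new idea: one must verify that in each conflict pattern exactly one of the two offending operations can be ``hidden'' from a solo run of the complementary process. Since $m_S = 2$ and only one process may crash, every solo continuation of a single process is admissible, which is exactly what makes the FLP-style argument go through at the minimal system size.
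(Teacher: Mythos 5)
The paper does not actually prove this theorem: it is invoked as a classical result and attributed to Loui and Abu-Amara \cite{LA87}, so there is no in-paper proof to compare against. Your sketch reconstructs exactly the argument of that cited source -- the FLP bivalency argument adapted to read/write shared memory -- and it is the right approach; the overall structure (initial bivalence via Validity and a chain/solo-run argument, a critical-configuration case analysis, an infinite forever-bivalent execution contradicting Termination) is correct in outline. Two remarks. First, the step you state as ``from any bivalent configuration one can take a single step leading to another bivalent configuration'' is strictly weaker than what the fairness weaving needs: an infinite run built from that claim alone may starve one process, and a run in which a process takes only finitely many steps is one in which that process is deemed crashed, so Termination imposes no contradiction on it by itself. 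You need the stronger FLP-style lemma that for every bivalent $C$ and every pending step $e$ of either process there is a finite schedule from $C$ ending with $e$ whose result is still bivalent; that is what lets you round-robin the two processes and keep the limit execution both admissible and undecided. Second, you can discharge the snapshot cases for free rather than by case analysis: as the paper itself notes, atomic snapshot is implementable from single-writer reads and writes, so any impossibility for the read/write model transfers verbatim, and the critical-configuration analysis need only treat reads and writes.
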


\subsection{Byzantine Mobile Robot System}

The robot system consists of $n + 1$ autonomous mobile robots 
$\mathcal{R} = \{ r_0, \cdots, r_{n+1}\}$ for $n > 2$. 
Each robot is non-oblivious (it can memorize a history of execution) and 
can be non-uniform (all robots can execute different codes)\footnote{
Note that our non-uniformity does not means that each robot has 
a different identity and any other robot can read this identity from
the observation. That is, each robot has a different identity but it 
can not visibly identify the labels of other robots. These difference
are inherently important. Actually, it yields a different computational
power to the robot system\cite{das10}.}
\footnote{Note that the model of this paper is stronger than the standard 
one, which usually assumes that each robot is anonymous, oblivious and uniform. 
However our aim is proving the impossibility, and thus to assuming stronger 
assumptions gives a stronger impossibility result.}

It does not have any device for direct communication, but is capable of 
observing its environment (\emph{i.e.}, the positions of other robots in its 
local coordinate system). One robot is modeled as a point located on a 
two-dimensional space. To specify the location of each robot consistently, we 
use a global Cartesian coordinate system. Notice that this global coordinate 
system is introduced only to ease the explanations, and that each robot is not 
aware of it. Each robot executes the deployed algorithm in {\em computational 
cycles} (or briefly {\em cycles}). At the beginning of a cycle, the robot 
observes the current environment (\emph{i.e.}, the positions of other robots) 
and determines the destination point based on the deployed algorithm. Then, 
the robot moves toward the computed destination\footnote{We also assume 
that the robot can reach the computed destination in the move phase for 
proving the impossibility}, which concludes the cycle. 

The local coordinate system of a robot is the Cartesian coordinate system whose 
origin is the current position of the robot. Moreover, $x$ and $y$ axes of each 
robot are parallel, \emph{i.e.} robots share a common direction. We assume 
strong multiplicity detection for the observation of points with two or more 
robots: Each robot can detect the exact number of robots that are located at a 
particular point. 

We assume the ATOM execution model, where an execution is divided into consecutive {\em rounds}. The \emph{scheduler} determines the set of performing robots for each round. At any round $r = 0, 1, 2, \cdots$, the scheduler determines whether each robot is {\em active} or {\em inactive}. Active robots perform one cycle in an atomic manner, and inactive ones wait during the round. The scheduler is \emph{fair} in the sense that every robot is activated infinitely often. In this paper, we assume that the scheduler is also {\em $k$-bounded}, which guarantees that if a robot is activated at round $r_1$ and $r_2$ ($r_1 < r_2$), any robot is activated at most $k$ times during $[r_1, r_2]$. We also consider another constraint for the scheduler, called {\em centralized} scheduler, which allows only one robot to be activated at each round.

In our model, robot may exhibit Byzantine faults. A Byzantine robot 
is allowed not to follow the deployed algorithm, and thus behaves
arbitrarily. However, if we consider the $k$-bounded scheduler, 
the constraint is also incurred to faulty robots: Even a Byzantine 
robot may change its position at most $k$ times during 
two consecutive activations of any correct robot. We call robots 
that are not Byzantine {\em correct}. Throughout this paper, we assume
that the system has one Byzantine robot.

In what follows, we give a formalisms of the robot model 
we now consider: Let $\mathcal{S}$ be the set of all possible internal 
states of the algorithm $\{\mathcal{A}_0, \mathcal{A}_1, \cdots 
\mathcal{A}_n\}$, where $\mathcal{A}_i$ is the algorithm deployed to $r_i$. 
We can define the {\em local state} of a robot
as a pair of its current internal state and its location in 
terms of global coordinate systems. A {\em system 
configuration} (or {\em configuration} for short), is an 
$(n+1)$-tuple of local states where each corresponds to 
the local configuration of a robot. We also define a location
configuration $L(C)$ of $C$ as an $(n+1)$-tuple of the global 
coordinates each of which corresponds to locations of each robot
at $C$. We sometimes treat $L(C)$ as a multiset. The location 
of robot $r_i$ at $C$ is denoted by $\mathbf{r}_i(C)$.
Algorithm $\mathcal{A}_i$ is defined as a mapping
$\mathcal{A}_i : \{\Real^2\}^{n+1} \times \mathcal{S} \to \Real^2 \times
\mathcal{S}$. That is, each robot computes the destination and its 
poststate from the observation result (i.e., the multiset of locations
in terms of its local coordinate system) and the current internal state. 

An {\em execution} of an algorithm is a sequence of configurations
$C_0, C_1, C_2, \cdots  $ where $C_{i+1}$ can be obtained from $C_i$
by making a number of robots move following the deployed algorithm
and by changing the location of Byzantine robot arbitrarily.


\subsection{Gathering Problem}
The {\em gathering problem} must ensure that all correct 
robots eventually meet at a point that is not predefined, starting 
from any configuration. Formally, we say that an algorithm 
$\mathcal{A}$ solves the gathering problem if any execution of 
$\mathcal{A}$ eventually reaches a configuration where 
all correct robots are on a single point and never leave there. 

Given an execution $\mathcal{E} = C_0, C_1, \cdots, C_j, \cdots$,
we say a configuration $C_j$ is {\em legitimate} if all correct robots 
keeps a common location at $C_{j'}$ for any $j' > j$.
For any configuration $C_j$, we define $\mathbf{m}(C_j)$ as the point 
at which the most robots are located in $C_j$\footnote{If two or more points 
has the same and maximum number of robots, an arbitrary one from them 
is deterministically chosen as the value of $\mathbf{m}(C_j)$.}, 
and $M(C_j)$ as its number.

\section{Impossibility of Byzantine Gathering}
\label{secReduction}

\subsection{Discussion and Outline}

Our impossibility proof is derived from the reduction of 
$1$-crash-resilient binary consensus on asynchronous shared memory
systems from $1$-Byzantine-resilient gathering on mobile robot systems. 
More precisely, we show that the $1$-crash-resilient binary consensus
algorithm can be constructed using any $1$-Byzantine-resilient 
gathering algorithm. The primary part of this reduction is a 
simulator algorithm of Byzantine mobile robot systems, which is
``synchronous'' in a certain sense, on the top of 
``asynchronous'' shared-memory systems. 

Let us first start the explanation of our idea from an analogy 
bridging those two models. It is easy to find some correspondence 
between atomic snapshot models and mobile robots: We see one 
memory cell of the shared memory as the local 
state (i.e., the location) of one robot. Then, taking a snapshot 
implies an observation, and writing some value to the cell implies move. 
This analogy makes us look at a framework of the simulation as follows:
\begin{enumerate}
\item Consider the shared memory system of $n + 1$ processes, each of 
which corresponds to one simulated robot. At the beginning of 
the simulation, each process ``encodes'' its proposal to the 
initial location of the robot (e.g., put the robot on $(1, 0)$ 
if the proposal is one and on $(0, 0)$ if zero).
\item Each process $p_i$ repeats the following task: The process
$p_i$ first takes a snapshot. Using the resultant value of 
the snapshot as the observed configuration, it activates robot 
$r_i$ and calculates the destination of gathering algorithm. 
Then, it actually makes $r_i$ move to the computed destination by 
writing its coordinate to the corresponding shared-memory cell.
\item If the observed configuration is legitimate, each process 
``decodes'' the decision value from the coordinate of the gathering 
point. 
\end{enumerate}

Unfortunately, the above framework does not work correctly. We can point
out at least two flaws: 1) The above simulation framework is not 
wait-free: To simulate $(n+1)$-robot systems correctly, all $n+1$ 
robots must appear on any configuration. However, 
if a faulty process is initially crashed, the initial location 
is never set to the corresponding robot. 2) The observation and 
movement is not atomic: To simulate semi-synchronous mobile robots, any concurrent cycle 
must be performed in synchronized manner, which is not guaranteed in
the above framework. For example, the following behavior is 
possible: A process takes a snapshot at $t$, and writing the 
destination at $t'$ ($t < t'$). Then, during the period $[t, t']$,
another process may simulate two or more activations. That behavior 
never matches the ATOM execution. 

Our reduction circumvents the above difficulties by employing the
concept of the BG-simulation by Borowski and Gafni \cite{borowsky1993generalized}.
The BG-simulation is originally invented to extend the wait-free 
impossibility into $1$-resilient impossibility. Its principle is 
to simulate the system of $n$ processes with single fault by 
only two processes with single fault. To make such a simulation 
successful, we cannot statically assign the role of 
simulated processes to simulator processes (because all of the 
processes assigned to a process $p_i$ are simultaneously crashed 
if $p_i$ is crashed). Instead, in the original BG-simulation, 
each process simulates {\em all} processes in round-robin 
manner. Then, one process can be simulated by two processes, which
may brings some inconsistency problem between two simulations. 
The original BG-simulation resolves those situations by a 
mutual-exclusion-like mechanism. This approach is wait-free 
as the simulation of $1$-resilient systems: Assume that one 
of two simulating processes can be crashed when it is in 
simulation of a process $p_i$. Then, $p_i$'s simulation 
can be blocked forever, but such a block occurs at most 
once since the number of simulator processes is two. Thus, 
we can regard $p_i$ as faulty in the simulation. This feature
will be helpful to resolve the flaw 1) we mentioned.

Yet another problem of simulating ATOM, however, 
still remains even if we simply use the BG-simulation. 
This is because the original BG-simulation does not intend 
to simulate synchronous systems. To clarify this problem, 
let us consider  the following cases: Two simulator processes 
concurrently simulate the consecutive two behaviors $x$ and 
$y$ of two different processes. First, both processes simulate 
the behavior $x$ but their simulations are inconsistent. Then, 
to decide the behavior $x$, both processes must complete the 
simulation of $x$. However, since one of two simulator processes 
may be faulty, the process completing the simulation 
of $x$ cannot wait for the other process. It must proceed
to the simulation of $y$ in spite of uncertainty of $x$,
which can result in the inconsistency between $x$ and $y$
after both of them are fixed. The original BG-simulation avoids
this uncertainty by reading the past state at the simulation of
$y$. That is, any simulation following $x$ is performed as if
the behavior $x$ does not occur, which continues until 
the behavior $x$ is completely fixed. Then, importantly, 
we cannot know when $x$ is fixed because the simulator 
processes are completely asynchronous, and thus the 
simulation also becomes asynchronous. Consequently, the 
straightforward use of BG-simulation fails to achieve the 
simulation of our reduction. The trick of our reduction 
algorithm is to make Byzantine behavior absorb this 
uncertainty.

\subsection{Reduction}

\subsubsection{Object {\sf slot}}

Our reduction algorithm uses a {\sf slot} shared object, which partly
abstracts the idea of the original BG-simulation. Informally, a Slot 
object is the write-once register shared by two processes, 
which is guaranteed to decide one submitted value as the committed value
only if no process crashes during submission, or two submissions by
different processes are not contended.  
It provides two operations {\sf submit}$_i(v)$ and {\sf read}$_i()$. 
The operation {\sf submit}$_i(v)$ denotes that $p_i$ 
writes a value $v$ to the slot. Since {\sf slot} is write-once, 
it can be activated at most once by each process. The read operation 
by $p_i$ returns the triple $(v_0, v_1, s)$, which respectively mean the 
values submitted by $p_0$ and $p_1$, and the status of the object 
(if $v_i$ is not submitted yet, $v_i = \bot$). The status indicates whether
the stored value is committed or not, and which is the committed value 
if committed. If the value is committed, the status
entry in the returned triple holds the process ID submitting the committed
value. Otherwise, it holds value $\bot$, which means the slot is not
committed yet. Formally, we can define the specification of {\sf slot} 
object as follows:
\begin{definition}
Let $O$ be a {\sf slot} object. The time when operation $O.\sf{submit}_i(v_i)$ 
begins and ends is denoted by $b_i$ and $e_i$ \footnote{If the
operation $O.\sf{submit}_i(v_i)$ does not begin (or does not ends by
$p_i$'s crash), we define $b_i = \infty$ (or $e_i = \infty$).}.  
Then, the following properties are guaranteed:
\begin{description}
\item[Validity] For any triple $(w_0, w_1, s)$ returned by a read operation,
$w_i \in \{v_i, \bot\}$ holds for any $i \in \{0, 1\}$.
\item[Contended Value Detection] If a read operation returns 
$(w_0, w_1, \bot)$, $w_0 \neq \bot$ and $w_1 \neq \bot$ hold. 
\item[Persistency] If a read operation returns a non-$\bot$ status $s$ at $t$,
any read operation invoked after $t$ returns status $s$. 
\item[Commitment] Any read operation invoked at $t' > \max\{e_0, e_1\}$
returns a non-$\bot$ status. 
\item[No Contention Commitment] If $e_i < b_{1 -i}$, 
any read operation invoked after $e_i$ returns the status $s = i$.
\item[Common Value Commitment] If $v_0 = v_1$ holds, 
any read operation invoked at $t' > \min\{e_0, e_1\}$ returns 
a non-$\bot$ status.
\end{description}
\end{definition}

In this paper, we do not present the implementation of \textsf{slot} object
because it is implicitly addressed in the original BG-simulation paper 
\cite{borowsky1993generalized,BGLR01}. The readers who are interested in the implementation can refer 
that paper or a standard textbook of distributed computing.

\subsubsection{Details of Simulation}

Algorithm \ref{algo:CC} shows the pseudocode 
description of our simulation. As explained, this algorithm is 
designed for two-processor asynchronous shared memory systems. 
In the following argument, let $\{p_0, p_1\}$ be the set of
simulator processes running this algorithm. As the simulation target,
we consider the robot system of $f=1$. Hence the total number of 
robots is $n+1$ . In the simulation, $r_n$ is regarded as the Byzantine
robot. The shared memory has an array $E$ of \textsf{slot}
objects. Each element of $E$ stores the result of an activation 
of a robot (represented as a pair of its internal state and location), 
and the whole of array $E$ corresponds to the round-robin scheduling of 
all {\em correct} robots. Thus, each slot 
$E[j]$ stores the behavior of robot $r_{j \bmod n}$. Note that 
$E$ does not explicitly contain the behavior of faulty robot $r_n$. 

The simulation algorithm consists of two blocks: The first for-loop 
constructs the initial configuration of the simulated execution,
where each process submits the initial configuration of each robot with
location $(0, 0)$ or $(1, 0)$ to $E[0..n-1]$ according to the 
simulator's proposal value. Simulating one-step movement of a robot 
corresponds to one pass of the following loop block (referred as 
``main loop'' in the following argument). The variable $u$ counts 
the number of simulated steps. That is, $u$-th loop simulates 
$(u - n)$-th time step of the simulated execution. Recall that 
the first $n$ slots are used for the construction of the initial 
configuration.

In the loop, the simulation exploits the 
subroutine called \textsf{getview}. It constructs 
the configuration as the observation result of robot $r_{u \bmod n}$ 
by referring last $n$ slots: The subroutine first takes a snapshot 
of $E$ (referred as $E'$ in the algorithm), and copies the committed 
values of $E'[u - n], E'[u - n + 1], \cdots E'[u - 1]$ to local variable
$C[0], C[1], \cdots C[n - 1]$. If some slot $E[u - n + g]$ ($0 \leq g \leq 
n - 1$) is uncommitted, one cannot determine the value to be committed, but
only obtain two submitted values $v_0$ and $v_1$. Then, we store
$v_i$ into $C[g]$ ($i$ is the ID of the simulator process) and
$v_{1 - i}$ into $C[n]$. The implication of this scheme is 
to ``assume'' $v_i$ is the committed value and regard $v_{1 - i}$ 
as a Byzantine behavior. If there is no uncommitted slot, one does 
not have to use the Byzantine behavior for conflict resolution of 
uncommitted slot, and thus an arbitrary location can be given to 
Byzantine robot $r_n$. In our 
simulation, a ``helping'' location, which is $\mathbf{m}(C)$, is given. 
The \textsf{getview} subroutine also returns a flag $q$, which returns
TRUE if all slots of $E'[u - n.. u-1]$ are committed. This information is 
used to determine whether the simulator process can decide a value or not.
After the construction of the observation result $C$, if $q =$ TRUE and
the constructed configuration is legitimate. $p_i$ decides a value 
decoded from $\mathbf{m}(C)$. Note that this decode function cannot be
defined as $\mathsf{decode}((0, 0)) = 0$ and $\mathsf{decode}(\mathbf{v}) 
= 1$ for all other $\mathbf{v}$. Even if all correct robots are initially 
placed on a common point $\mathbf{v}$, the point of gathering is not 
necessarily $\mathbf{v}$ because we consider non-oblivious robots.
The way of defining function $\mathsf{decode}$ is argued in the 
following correctness proof.

\Newcodeline
\begin{algorithm}[h]
\caption{{\sf ConsensusToGathering}: Reduction from Consensus
to Byzantine Gathering}
\label{algo:CC}
{\small 
\begin{tabbing}
111 \= 11 \= 11 \= 11 \= 11 \= 11 \= 11 \= \kill
\Cl \> $E[0..\infty]$ of {\sf slot} (shared objects)\crm
\crm
\Cl \> {\bf procedure} {\sf getview}$_i(j)$: \crm
\Cl \> \> $q \ot$ TRUE;  $E' \ot \sf{snapshot}(E)$ \crm
\Cl \> \> {\bf for} $l \ot 0$ to $n - 1$ {\bf do} \crm
\Cl \> \> \> $(\mathbf{v}_0, \mathbf{v}_1, s) 
\ot E'[l + j - n].\textsf{read}_i()$ \crm
\Cl \> \> \> {\bf if} $s \neq \bot$ {\bf then} \` 
\Scomment{when $s$ is committed} \crm
\Cl \> \> \> \> $C[l] \ot \mathbf{v}_s$ \crm
\Cl \> \> \> {\bf else} \` \Scomment{when $s$ is uncommitted} \crm
\Cl \> \> \> \> $C[l] \ot \mathbf{v}_i$; $C[n] \ot \mathbf{v}_{1- i}$; 
$q \ot $FALSE \crm
\Cl \> \> \> {\bf endif} \crm
\Cl \> \> {\bf endfor} \crm
\Cl \> \> {\bf if} all slots in $E'[j-n, j-1]$ are committed {\bf then} \crm
\Cl \> \> \> $C[n] \ot \mathbf{m}(C)$ \crm
\Cl \> \> {\bf endif} \crm
\Cl \> \> return($q, C$) \crm
\Cl \> {\bf endprocedure} \crm
\crm
\Cl \> {\bf when} propose$_i(v)$ : \crm
\Cl \> \> {\bf for} $u \ot 0$ to $n-1$ {\bf do} \` 
\Scomment{Construction of initial configuration} \crm
\Cl \> \> \> $E[u].\textsf{submit}_i((v, 0), \mathrm{INIT}_u))$ \`
\Scomment{$\mathrm{INIT}_u$ is the initial state of $r_u$} \crm
\Cl \> \> {\bf endfor} \crm
\Cl \> \> $u \ot n$ \crm
\Cl \> \> {\bf loop} \crm
\Cl \> \> \> $(q, C) \ot \mathsf{getview}_i(u)$ \crm
\Cl \> \> \> {\bf if} $q =$ TRUE and $C$ is legitimate 
{\bf then} \crm
\Cl \> \> \> \> decide$_i(\mathsf{decode}(\mathbf{m}(C)))$ and {\bf exit} \crm 
\Cl \> \> \> {\bf endif} \crm
\Cl \> \> \> $E[u].\textsf{submit}_i(\mathsf{move}({\mathcal{A}_{(u \bmod n)}},
C, r_{(u \bmod n)},))$ ; $u \ot u + 1$ \crm
\Cl \> \> {\bf endloop} 
\end{tabbing}
}
\end{algorithm}

\subsection{Outline of the Correctness Proof}
\label{sec:Correctness}

We informally show how and why our simulation algorithm correctly
solves the consensus. We first introduce several notations:
Each slot consisting in the array $E$ is identified by its index.
We say that a process $p_i$ {\em enters} a slot $j$ (or finishes $j-1$) 
at $t$ if it takes $j$-th snapshot at $t$. The time when $p_i$ enters
slot $j$ is denoted by $t^i_j$.
Let $c(j)$ be the process ID submitting the committed value of slot $E[j]$
(say ``$p_{c(j)}$ commits $E[j]$'' or ``$p_{c(j)}$ is {\em committer} of 
$E[j]$'' in what follows), and $C^i_j$ be the observation result that $p_i$
obtains at the $j$-th main loop. We define 
$\alpha(j) = {j \bmod n}$ for short (i.e., $\alpha(j)$ is the ID of the 
robot to be activated at the $j$-th main loop). 
We also introduce {\em swap operator} $\pi_k$. For a given configuration 
$C$, we define $\pi_k{C}$ to be the configuration obtained by 
swapping two entries $C[k]$ and $C[n]$. By the definition, 
$\pi_n{C} = C$ clearly holds. 

Intuitively, the role of swap operators is to correct ``misunderstanding''
of uncertain slots. An example can be shown as follows: Let $t$ be the time
when the committer $p_0$ of slot $j$ takes $j$-th snapshot $E'$. Assume
a slot $g$ is uncommitted in $E'$ and $p_{1}$ commits both slots $g$ and 
$j+1$. Letting $(x_0, x_1)$ be two values submitted to $g$, since
$g$ is uncommitted in $E'$, the constructed observation result $C^0_j$
satisfies $C^0_j[\alpha(g)] = x_0$ and $C^0_j[n] = x_1$. On the other hand,
since $g$ is committed at the construction of $C^1_j$, 
$C^1_j[\alpha(g)] = x_1$ holds. In this case, we cannot have the execution connecting
$C^0_j$ and $C^1_{j+1}$, but the connection between
$\pi_{\alpha(g)}C^0_j$ and $C^1_{j+1}$ is possible. 

We describe $C \trans{x} C'$ if the activation of Byzantine robot $r_n$ 
followed by that of $r_{x}$ makes $C$ reach $C'$.
Then, it is assumed that the activation of Byzantine robot $r_n$ 
provides the best case movement. That is, $C \trans{x} C'$ 
implies that $C$ can reach 
$C'$ if the Byzantine robot appropriately moves after the 
activation of $r_x$.
The intended (but not attained) principle of our simulation 
is that one can organize the simulated execution 
$E = \pi_{\gamma_n}C^{c(n)}_n \trans{\alpha(n)} \pi_{\gamma_{n+1}}C^{c(n)}_{n+1} 
\trans{\alpha(n+1)} \cdots \trans{\alpha(j-1)} \pi_{\gamma_j}C^{c(j)}_j 
\trans{\alpha(j)} \pi_{\gamma_{j+1}}C^{c(j+1)}_{j+1}, \cdots$ for 
an appropriate sequence $\gamma_n, \gamma_{n+1}, \gamma_{n+2} \cdots$.
Unfortunately, that intention is broken in some critical case, 
which is explained below:  

\begin{quote}
Consider the situation where some process $p_i$ starts the $(j+n)$-th 
loop before the commitment of slot $j$. In this situation, $p_i$ has 
to make $r_{\alpha(j+n)}$ move in spite of the uncertainty of 
its location (because $E[j]$ is not committed yet and thus there 
are two possibilities $x_0$ and $x_1$ of $r_{\alpha(j)}$'s current 
local state). Then, $p_i$ chooses $x_i$ as if $E[j]$ already 
committed with value $x_i$. However, if $E[j]$ is actually 
committed with the value $x_{1 - i}$ by the other process 
$p_{1 - i}$, the inconsistency arises: We cannot construct 
a single execution that reflects two committed values of 
slots $j$ and $j+n$. 
\end{quote}

To achieve the consistency in the above scenario, we need to
correct the committers of $j$ and $j+n$ as if 
they are committed by the same value. That scenario and the correction
scheme is formalized by the notions of {\em critical slots} and 
{\em validators}, which are defined as follows:

\begin{definition}
Let $\mathcal{E} = [n+1, n+l]$ be a finite-length sequence of slots.
A slot $j$ is {\em critical} in $\mathcal{E}$ 
if it is uncommitted at $t^{c(j)}_{j+n}$ 
in the $(j+n)$-th snapshot taken by $p_{c(j)}$ and $c(j) \neq c(j+n)$ holds. 
\end{definition}

\begin{definition}
Let $\mathcal{E} = [n+1, n+l]$ be a finite-length sequence of slots.
The {\em validator} $p_{\mathit{val}(j)}$ of slot $j$ in $\mathcal{E}$
is defined as, \begin{equation*}
p_{\mathit{val}(j)} = \left\{
\begin{array}{ll}
p_{\mathit{val}(j+n)}, & \mbox{if $j$ is critical and $j \leq l$}\\
p_{c(j)}, & \mbox{otherwise}
\end{array}
\right. 
\end{equation*}
\end{definition}

Note that the criticality and validator of 
each slot is determined for a fixed finite-length
$\mathcal{E}$. Thus, if we consider a longer sequence $\mathcal{E}'$ 
obtained by adding a postfix sequence into $\mathcal{E}$, those can change because the criticality and the validators of last $n$ slots 
$[j - n + 1, j]$ depends on the following $n$ slots $[j + 1, j+n]$.
Intuitively validators are the processes whose observation results constitute 
the simulated execution.  Actually, the {\em simulated configuration}
corresponding to each slot is defined as follows:

\begin{definition}
Given a finite-length sequence $\mathcal{E}$ of slots, the {\em simulated
configuration} $C_j$ of slot $j$ for $\mathcal{E}$ is defined as follows:
\begin{itemize}
\item $C_j = C^{\mathit{val}(j)}_j$ if no slot in $[j-n, j-1]$ is 
uncommitted at $t^{\mathit{val}(j)}_j$.
\item If a slot $k \in [j - n, j-1]$ is uncommitted at 
$t^{\mathit{val}(j)}_j$, $C_j = \pi_\gamma C^{\mathit{val}(j)}_j$ 
for appropriate $\gamma$ ($\in \{\alpha(k), n\}$) such that 
$\pi_\gamma C^{\mathit{val}(j)}_j[\alpha(k)] = v^{\mathit{val}(j)}_k$
holds. 
\end{itemize}
\end{definition}

The key lemma of our reduction scheme is the sequence of 
simulated configurations (say {\em simulated execution}) 
constitutes a possible execution.

\begin{lemma} \label{lmaAdmissibleExecution}
Given a finite-length sequence of slots $\mathcal{E}$ and any 
$j$ that is smaller than the length of $\mathcal{E}$, 
$C_j \trans{\alpha(j)} C_{j+1}$ holds. 
\end{lemma}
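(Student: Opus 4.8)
The plan is to fix the finite sequence $\mathcal{E}$ and a slot index $j < |\mathcal{E}|$, and to establish the two ingredients that together yield $C_j \trans{\alpha(j)} C_{j+1}$: \emph{(I)} for every correct robot $r_m$ with $m \neq \alpha(j)$ we have $C_j[m] = C_{j+1}[m]$, and \emph{(II)} $C_{j+1}[\alpha(j)] = \mathsf{move}(\mathcal{A}_{\alpha(j)}, \hat C, r_{\alpha(j)})$ for some configuration $\hat C$ that coincides with $C_j$ on all correct robots. Given \emph{(I)} and \emph{(II)}, the transition follows by letting the Byzantine robot $r_n$ first move to the position $\hat C$ prescribes, then letting $r_{\alpha(j)}$ execute $\mathcal{A}_{\alpha(j)}$, and finally letting $r_n$ move once more to realize $C_{j+1}[n]$; this respects the $n$-bounded scheduler since $r_n$ moves only twice (and $n \ge 3$) between two activations of a correct robot, and it matches the "best case Byzantine movement" understanding of $\trans{\alpha(j)}$. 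Two preliminary facts should be recorded first. The structural one: at most one slot of the window $[j-n,j-1]$ is uncommitted in the snapshot taken by $p_{\mathit{val}(j)}$ at time $t^{\mathit{val}(j)}_j$ --- indeed both simulators submit to slots in increasing order, Contended Value Detection says that seeing a slot uncommitted means both simulators have already submitted to it, and Commitment together with No Contention Commitment force every strictly earlier, already-completed slot of the lagging simulator to be committed; hence the uncommitted slot is exactly the one the lagging simulator is presently working on, so it is unique and $C_j$ is well defined. The second fact is that, by Persistency, the committed value of a slot never changes once some read reveals it, so the committer $c(\cdot)$ is well defined.

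For \emph{(II)}, observe that the \textsf{submit} statement of the main loop makes the committed value of $E[j]$ equal to $\mathsf{move}(\mathcal{A}_{\alpha(j)}, C^{c(j)}_j, r_{\alpha(j)})$, while the definition of the simulated configuration makes $C_{j+1}[\alpha(j)]$ equal to that committed value (slot $j$ is the newest slot of the window of $C_{j+1}$, and it sits at position $\alpha(j)$). Hence it suffices to show that $C_j$ and $C^{c(j)}_j$ agree on the correct robots. If slot $j$ is not critical, then $\mathit{val}(j) = c(j)$, so $C_j = \pi_{\gamma_j} C^{c(j)}_j$, and one checks by direct inspection of \textsf{getview} and the constraint $\pi_{\gamma_j} C^{\mathit{val}(j)}_j[\alpha(k)] = v^{\mathit{val}(j)}_k$ in the simulated-configuration definition that $\gamma_j$ can be taken to act as the identity on the correct robots, so the two configurations coincide there. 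If slot $j$ is critical, then $\mathit{val}(j) = \mathit{val}(j+n)$ is obtained by chasing the chain of critical slots $j, j+n, j+2n, \dots$, and here lies the real work: a sub-lemma must assert that the observation $p_{\mathit{val}(j)}$ builds at its $j$-th loop agrees, on the correct robots, with $C^{c(j)}_j$. The intuition is that criticality of $j$ means the committer of $j+n$ raced past the committer of $j$ while $E[j]$ was still uncommitted, so $p_{\mathit{val}(j)}$'s view of the earlier slots is frozen at the committed values, and the single position where the two simulators could disagree, namely $\alpha(j)$, is reconciled by the swap operator together with a $v^{\mathit{val}(j)}_j = v^{c(j)}_j$-type identity forced by No Contention Commitment / Common Value Commitment.

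For \emph{(I)}, compare the two snapshots defining $C_j = \pi_{\gamma_j} C^{\mathit{val}(j)}_j$ and $C_{j+1} = \pi_{\gamma_{j+1}} C^{\mathit{val}(j+1)}_{j+1}$. Their windows overlap in the slots $[j-n+1,j-1]$; slot $j-n$ (position $\alpha(j)$) only leaves the window and slot $j$ (again position $\alpha(j)$) only enters it, so the only correct robot whose value can change between $C_j$ and $C_{j+1}$ is $r_{\alpha(j)}$, exactly as wanted --- provided each shared slot contributes the same value to both configurations. For a shared slot committed in both snapshots this is immediate from Persistency. For the (unique) shared slot that is uncommitted in one of the two snapshots, one checks that the relevant swap operator, chosen according to the simulated-configuration definition via that slot's committer and validator, replaces the "guessed" value by the same value on both sides; this is precisely the role the swap operators play, turning the per-simulator misunderstandings of an uncommitted slot into one globally consistent choice.

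I expect the critical-slot sub-lemma inside \emph{(II)} --- and the analogous swap-reconciliation for the uncommitted shared slot in \emph{(I)} --- to be the main obstacle: it requires tracking, through Persistency, No Contention Commitment and Common Value Commitment, exactly which value every slot of the relevant window has (or will) commit at the instant each simulator takes its snapshot, and then showing these choices remain mutually consistent along an entire chain of critical slots. I would prove the sub-lemma by induction on $|\mathcal{E}| - j$, so that slot $j+n$, on which the criticality of $j$ depends, is already under control, the base case being the terminal block of at most $n$ slots, which the validator definition treats as non-critical.
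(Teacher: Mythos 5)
Your decomposition --- (I) all correct robots other than $r_{\alpha(j)}$ keep the same entry, (II) the new entry at $\alpha(j)$ is what $\mathcal{A}_{\alpha(j)}$ computes from (something observationally equal to) $C_j$ --- is exactly the skeleton of the paper's proof, which establishes (II) as a separate lemma ($C_{j+1}[\alpha(j)] = v^{\mathit{val}(j)}_j$) and (I) as the main argument. However, the proposal goes wrong, or stops, precisely at the points you yourself flag as ``the real work.'' First, in (II) you anchor $C_{j+1}[\alpha(j)]$ to the \emph{committed} value of $E[j]$ and hence to the committer's view $C^{c(j)}_j$, and you then need the sub-lemma that $C^{\mathit{val}(j)}_j$ agrees with $C^{c(j)}_j$ on the correct robots. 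That sub-lemma is false in exactly the critical case it is meant to cover: when $j$ is critical the two simulators resolved some uncommitted window slot $k$ with different submitted values, so their views differ at position $\alpha(k)$ (not at $\alpha(j)$ as you state), and the committed value of $E[j]$ need not equal the validator's submission. The paper avoids this entirely by defining the simulated execution through the \emph{validator's} submission $v^{\mathit{val}(j)}_j$ and view $C^{\mathit{val}(j)}_j$; the whole purpose of the validator chain is to never have to reconcile the committer's and validator's views at a critical slot. Relatedly, your claim that $\hat C$ ``coincides with $C_j$ on all correct robots'' is not what holds: $C_j = \pi_\gamma C^{\mathit{val}(j)}_j$ differs from $C^{\mathit{val}(j)}_j$ at the correct-robot position $\gamma$; what makes the move come out right is that robots observe only the \emph{multiset} of locations, which $\pi_\gamma$ preserves --- an ingredient your argument never invokes.

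Second, in (I) the genuinely hard case --- a shared window slot $k$ whose status changes from uncommitted to committed between the two snapshots, with the two processes having submitted different values --- is dismissed with ``one checks that the relevant swap operator replaces the guessed value by the same value on both sides.'' The paper's actual argument here is a contradiction through criticality: if $C_j[\alpha(k)] \neq C_{j+1}[\alpha(k)]$ then $\mathit{val}(k) \neq c(k)$, so $k$ is critical, which forces $p_{\mathit{val}(k)}$ to remain at slot $k$ until the other process enters $k+n > j+1$, contradicting the fact that $k$ is already committed at $t_{j+1}$. This argument in turn requires knowing $t^{\mathit{val}(j)}_j < t^{\mathit{val}(j+1)}_{j+1}$ (the paper's separate ordering lemma), which your proposal never establishes. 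Your planned induction on $|\mathcal{E}| - j$ is a reasonable way to control the validator chain, but as written the proposal neither reproduces the paper's criticality contradiction nor supplies a working substitute for it, so the two central steps remain open.
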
 

The above lemma implies that there exists an
execution $C_n, C_{n+1}, C_{n+2}, \cdots $ under the
$(n-1)$-bounded centralized scheduler. Informally, the uniform 
agreement and termination properties are deduced from this lemma
and the correctness of the gathering algorithm. However, the validity 
must be considered more carefully. An important notice is that 
the validity property is strongly related to the way of defining 
function {\sf decode}. In the rest of this subsection, we state 
the appropriate definition of function {\sf decode} guaranteeing 
the validity of {\sf ConsensusToGathering}. 

Let us consider the situation where both $p_0$ and $p_1$ propose 
the same value (assume zero). Then all robots are placed on $(0, 0)$ 
initially in the simulation. As we mentioned in the previous section, 
however, it does not implies that all robots are gathered at $(0, 0)$ 
because they are non-oblivious. Let $C$ be the configuration where 
all robots are placed on $(0, 0)$. To guarantee the validity, we 
need to decode the point of gathering in any simulated execution 
starting from $C$ to zero. Fortunately since $p_0$ and $p_1$ has 
a common proposal, it is ensured that the simulated execution of 
$\mathcal{A}$ is uniquely determined: They submit the same value 
to each slot in $[0..n-1]$ From the common value commitment property 
of {\sf Slot} objects, any slot in $[0..n-1]$ is immediately committed 
when at least one process finishes it. Thus, when a process enters slot $n$, 
all slots of $[0..n-1]$ has been committed. That is, 
the configurations constructed at slot $n$ are the same among $p_0$
and $p_1$, and thus the submissions of $p_0$ and $p_1$ at slot $n$ are 
also the same. Inductively we can conclude that the submission values 
of two processes to any slot are the same. That is, the simulated execution 
(and thus the point of gathering) is uniquely determined from the initial 
configuration. Let $\mathbf{x}$ be the point of gathering corresponding 
to $C$. Provided {\sf decode}$(\mathbf{v})$ as the function 
returning zero if $\mathbf{v} = \mathbf{x}$ and one otherwise, we 
can guarantee that the algorithm decides zero if all processes 
propose zero. For the case that all processes propose one, 
we need to show that the gathering is never achieved at $\mathbf{x}$. 
It can be proved as follows: Let $C'$ be the configuration where all 
robots are placed on $(1, 0)$. By the same reason as the case of 
all proposing zero, the simulated execution starting from $C'$ is 
also uniquely determined. Furthermore, its activation schedule is 
completely same as the case of all proposing zero. Since each robot 
does not aware of global coordinates, it follows that the robot is 
gathered at $\mathbf{x} + (1, 0)$ in the simulated execution starting 
from $C'$.

The above argument clearly provides the validity of the consensus, and thus
the following theorem is obtained.

\begin{theorem} \label{thmImpossibilityGathering}
The algorithm {\sf ConsensusToGathering} correctly solves the 
consensus problem, and thus
there exists no gathering algorithm tolerating up to one Byzantine 
fault even if we assume agreed direction of local coordinate systems, 
instantaneous movements, $n$-bounded centralized scheduler, and 
non-oblivious and non-uniform robots.
\end{theorem}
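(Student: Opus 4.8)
The plan is to establish the theorem in two steps: first, that the algorithm \textsf{ConsensusToGathering} is a correct $1$-crash-resilient binary consensus algorithm for the $m_S=2$ asynchronous SWMR shared-memory model whenever $\{\mathcal{A}_0,\ldots,\mathcal{A}_n\}$ is a $1$-Byzantine-resilient gathering algorithm for the robot model in the statement (agreed axis directions, instantaneous moves, $n$-bounded centralized scheduler, non-oblivious and non-uniform robots); second, to read this as the contrapositive of Theorem \ref{thmImpConsensus}. The first step splits along the three consensus requirements --- Termination, Agreement and Validity --- and I would rest almost everything on Lemma \ref{lmaAdmissibleExecution}, which tells me that the sequence of simulated configurations is a genuine execution of $\mathcal{A}$ with $r_n$ playing the Byzantine robot, run under a fair, $n$-bounded centralized scheduler.

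For Termination and Agreement, suppose toward a contradiction that no non-faulty simulator ever executes \textsf{decide}. Such a simulator then iterates the main loop forever and therefore submits to every slot of $E$; since at most one process crashes, at most one slot --- the one, if any, in which a crashed process was left mid-\textsf{submit} --- can remain uncommitted forever, and every other slot eventually commits by the Commitment and No-Contention-Commitment properties of \textsf{slot}. Applying Lemma \ref{lmaAdmissibleExecution} to arbitrarily long prefixes $\mathcal{E}$ then yields an infinite execution $C_n,C_{n+1},C_{n+2},\ldots$ of $\mathcal{A}$, which by correctness of the gathering algorithm reaches a legitimate configuration $C_{j^{*}}$: all of $r_0,\ldots,r_{n-1}$ occupy one common point $\mathbf{g}$ from $C_{j^{*}}$ onward. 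Eventually the $n$-slot window inspected by \textsf{getview} lies entirely beyond the one bad slot and consists of committed entries, so a surviving $p_i$ obtains $q=\mathrm{TRUE}$ together with a configuration $C$ in which every correct robot sits on $\mathbf{g}$ and, by the helping assignment $C[n]\ot\mathbf{m}(C)$, the Byzantine entry equals $\mathbf{g}$ too; this $C$ is legitimate, so $p_i$ decides --- a contradiction, which gives Termination. Agreement is then immediate, since any deciding process returns $\mathsf{decode}(\mathbf{m}(C))$ for such a configuration, and $\mathbf{m}(C)$ --- the common gathering point of the simulated execution --- is the same point for both processes and for all sufficiently late loops.

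Validity is the delicate point, and here I would follow the recipe already sketched before the theorem. If $p_0$ and $p_1$ propose the same bit, say $0$, they \textsf{submit} identical values to every slot; by Common-Value-Commitment each slot commits as soon as one process finishes it, so the two processes always read identical views and, inductively, make identical submissions --- the simulated execution, hence its gathering point $\mathbf{x}$, is uniquely determined and involves no genuine Byzantine interference. I would then define $\mathsf{decode}(\mathbf{v})=0$ iff $\mathbf{v}=\mathbf{x}$ and $\mathsf{decode}(\mathbf{v})=1$ otherwise, so that the all-zero input forces decision $0$. For the all-one input, the initial location configuration is the translate by $(1,0)$ of the all-zero one, and since robots perceive only relative positions and share axis directions, the simulated execution and its activation schedule are the verbatim translate; gathering therefore occurs at $\mathbf{x}+(1,0)\neq\mathbf{x}$ and the decision is $1$. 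For a mixed input, Termination already yields a decision, and whatever $\mathsf{decode}$ returns lies in $\{0,1\}$, each of which is proposed by some process; hence Validity holds in every case.

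Combining the three properties, \textsf{ConsensusToGathering} solves $1$-crash-resilient binary consensus on the asynchronous SWMR model with $m_S=2$, contradicting Theorem \ref{thmImpConsensus}; therefore no gathering algorithm tolerating one Byzantine fault can exist under agreed axis directions, instantaneous movements, the $n$-bounded centralized scheduler, and non-oblivious, non-uniform robots. I expect the genuine obstacle to lie entirely inside Lemma \ref{lmaAdmissibleExecution}: reconciling the simulators' possibly-disagreeing local views --- produced by the ``trust the committer's value, blame the discrepancy on the Byzantine robot'' heuristic --- into a single well-formed ATOM execution, via the critical-slot / validator / swap-operator bookkeeping; a secondary nuisance is checking that the induced schedule really is centralized and $n$-bounded. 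Both of these are granted to me here, so the theorem itself is essentially an assembly step.
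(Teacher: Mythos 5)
Your overall architecture (three consensus properties plus the contrapositive of Theorem \ref{thmImpConsensus}) matches the paper, and your Validity argument is essentially the paper's own. But your closing guess that ``the genuine obstacle lies entirely inside Lemma \ref{lmaAdmissibleExecution}'' is wrong, and the place where you wave it through --- ``Agreement is then immediate'' --- is exactly where the paper has to do real additional work. The difficulty is this: the first decider, say $p_0$, exits at slot $j$ after seeing a legitimate view $C^0_j$ with all $n$ window slots committed; $p_1$ then continues the simulation \emph{alone} and may decide only at a much later slot $j+h$. During $[j,j+h]$ one slot (the one $p_0$ had submitted to but not committed before exiting) can remain uncommitted forever, so $p_1$'s views are not the ``true'' simulated configurations: the Byzantine entry $C[n]$ is then one of the two conflicting submissions rather than the helping value $\mathbf{m}(C)$, and the view is only \emph{semi-legitimate}. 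Nothing in Lemma \ref{lmaAdmissibleExecution} tells you that the gathering point survives this; the paper proves it separately (Lemma \ref{lmaN-1Gathered}) by showing that the $n$ robots already gathered at $\mathbf{m}$ cannot distinguish a semi-legitimate configuration from a legitimate one in which the stray robot is the Byzantine one, hence never move, so $\mathbf{m}(C^0_j)=\mathbf{m}(C^1_{j+h})$. Your proposal contains no substitute for this step, and without it Agreement does not follow.

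A second, smaller gap: your Termination argument negates the wrong statement. Assuming ``no non-faulty simulator ever decides'' and deriving a contradiction only shows that \emph{at least one} correct process decides; Termination requires that \emph{every} correct process decides, so you must also handle the run in which one process decides and the other loops forever. The paper closes this by observing that an exited process is indistinguishable, from its peer's viewpoint, from a process that crashed at that slot, so the ``some survivor decides'' argument re-applies to the remaining process. Both gaps are repairable with the paper's tools, but neither is an ``assembly step.''
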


\section{General Formation}
\label{secGeneralPattern}

While the algorithm {\sf ConsensusToGathering} is constructed for 
leading the impossibility of Byzantine gathering,  it is easy to use 
its scheme for obtaining the impossibility of a more general class 
of formation problems.

We first formally define the general formation problem. A Byzantine 
formation problem on $n + f$ robots is defined by a family $\mathcal{F}$ 
of multisets of locations with cardinality $(n+f)$. An algorithm 
$\mathcal{B}$ solves the Byzantine formation problem $\mathcal{F}$ if
in any execution of $\mathcal{B}$ all correct robots eventually form 
and keep the location set that is a subset of an element in $\mathcal{F}$.
Clearly, it is not possible to show the impossibility of any formation
$\mathcal{F}$. For example, if $\mathcal{F} = (\mathcal{R} \times {R})^{n+f}$
(i.e., any multiset of $n+f$ locations belongs to $\mathcal{F}$), 
$\mathcal{F}$ can be solved trivially. In the following argument,
we consider the family of patterns to which we can apply our reduction
technique, and explain how we apply it. 

We only consider the case of $f=1$. We define a $1$-neighborhood
relation between two location configurations $P$ and $P'$, which 
holds if and only if $|P \cap P'| \geq n$. The transitive 
closure of $1$-neighborhood relation is denoted by $\sim$. Given 
a formation problem $\mathcal{F}$, we define its {\em $1$-neighborhood 
extension} $\mathcal{F}^1 = \{P' | \exists P \in \mathcal{F}, P' \sim P\}$.
Since the relation $\sim$ is an equivalence relation, we can define 
an equivalent class over $\mathcal{F}^1$, which is denoted by 
$[\mathcal{F}^1]$. Given $P = \{\mathbf{v}_0, \mathbf{v}_1, \cdots, 
\mathbf{v}_{n+1}\}$, we define $P + \mathbf{x} = 
\{\mathbf{v}_0 + \mathbf{x}, \mathbf{v}_1 + \mathbf{x}, \cdots, 
\mathbf{v}_{n+1} + \mathbf{x} \}$. We first define the 
formation problems we can handle in our reduction algorithm.

\begin{definition}
A formation problem $\mathcal{F}$ is said to be {\em bivalent} if
there exists $\mathbf{x}_P$ for any $P \in \mathcal{F}$ such that
$P$ and $P + \mathbf{x}_P$ belong to different classes in $[\mathcal{F}^1]$.
\end{definition}

It can be shown that many well-known pattern formation problems 
(circle, line, and so on) are bivalent. We present several examples
in the appendix.

We introduce the modification of the algorithm {\sf ConsensusToGathering} to 
lead the impossibility of bivalent pattern formation problems. The framework 
is completely same as the reduction to gathering. Each simulator process 
first tries to place all robots on some coordinates according to its proposal, 
and run the simulation. Finally, the process decodes a decision value when 
the simulation reaches a legitimate configuration. The points to be 
addressed are \emph{(i)} the definition of functions 
$M(C)$ and $\mathbf{m}(C)$,  and \emph{(ii)} the locations of robots 
initially placed and the definition of 
function {\sf decode}. We explain the modification of the simulation 
algorithm for those points:

\paragraph{Defining $M(C)$ and $\mathbf{m}(C)$}
The function $M(C)$ is defined as $M(C) = 
\max_{P \in \mathcal{F}}|L(C) \cap P|$. Let $P'$ be the pattern 
in $\mathcal{F}$ maximizing $|L \cap P|$ (if two or more patterns 
in $\mathcal{F}$ maximize it, an arbitrary one is deterministically chosen). 
We define $\mathbf{m}(C)$ as a coordinate in $P' \setminus L$, which 
is also chosen deterministically if $|P' \setminus L| >1$. 

\paragraph{Initial location and decoding function {\sf decode}} 
For the proposal zero, we define the initial location configuration of $n$ 
robots $((0, 0), (1, 0), (2, 0), \cdots, (n-1, 0))$. That is, the 
process proposing zero submits value $((i, 0), \mathrm{INIT}_i)$ to 
slot $i$ ($\mathrm{INIT}_i$ is the initial local state of robot $r_i$). 
By the same argument as the case of gathering, for that initial 
configuration, we can uniquely determine the legitimate configuration $C$ 
that the simulated execution eventually reaches. From the definition 
of bivalent formation problems, there exists a vector $\mathbf{x}$ such 
that $L(C) + \mathbf{x}$ belongs to a class of $[\mathcal{F}^1]$ different 
from what $L(C)$ belongs to. We define the initial location configuration 
for proposal one as $((0, 0) + \mathbf{x}, (1, 0) + \mathbf{x}, 
(2, 0) + \mathbf{x}, \cdots, (n-1, 0))$. We further define {\em decode} 
as the function from a (legitimate) configuration to $\{0, 1\}$.
It returns zero if the given configuration belongs to a class of 
$\mathcal{F}^1$ where $L$ belongs, and one otherwise.

We can prove similarly as Theorem \ref{thmImpossibilityGathering} that 
the modified algorithm correctly solves the consensus. Consequently
the following theorem is obtained.

\begin{theorem} \label{thmImpossibilityPattern}
Any bivalent formation problem is unsolvable in the system with $f=1$ 
even if we assume a common orientation of local coordinate 
systems, instantaneous movements, $n$-bounded centralized scheduler, 
and non-oblivious and non-uniform robots.
\end{theorem}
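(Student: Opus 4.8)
The plan is to mimic the proof of Theorem~\ref{thmImpossibilityGathering} almost verbatim, replacing ``gathering point'' by ``target pattern'' and checking that the three correctness properties (termination, agreement, validity) still go through for the modified \textsf{ConsensusToGathering}. First I would observe that Lemma~\ref{lmaAdmissibleExecution} is stated purely in terms of the \textsf{slot} objects, the swap operators, and the generic transition relation $\trans{\alpha(j)}$; it does not use anything specific to gathering. Hence, with the redefined $M(C)$ and $\mathbf{m}(C)$ (the ``helping'' location now being a missing coordinate of the pattern in $\mathcal{F}$ best matching $L(C)$), the same lemma yields that the simulated execution $C_n \trans{\alpha(n)} C_{n+1} \trans{\alpha(n+1)} \cdots$ is a legal execution of the formation algorithm $\mathcal{B}$ under the $(n-1)$-bounded centralized scheduler. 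Since $\mathcal{B}$ solves $\mathcal{F}^1$ (note: the correct robots may form any element of $\mathcal{F}$, but because one robot is Byzantine and may sit anywhere, the reachable legitimate configurations lie in the $1$-neighborhood extension $\mathcal{F}^1$ — this is exactly why we pass to $\mathcal{F}^1$ and its equivalence classes $[\mathcal{F}^1]$), the simulated execution eventually reaches a legitimate configuration, which gives termination of the consensus; and because the simulated execution is a single well-defined execution, all deciding processes decode from the same legitimate location configuration, giving agreement.

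The substantive part is validity, and I would handle it exactly as in the gathering case. Suppose both simulators propose $0$. Then every correct robot $r_i$ is initially placed at $(i,0)$, both processes submit the identical pair $((i,0),\mathrm{INIT}_i)$ to slot $i$, and by the Common Value Commitment property of \textsf{slot} each of the first $n$ slots is committed as soon as one process finishes it; inductively the two processes submit identical values to every slot, so the simulated execution — and hence the legitimate location configuration $L(C)$ it reaches — is uniquely determined. Define $\mathsf{decode}$ to return $0$ on configurations in the class of $[\mathcal{F}^1]$ containing $L(C)$ and $1$ otherwise; then the all-zero run decides $0$. For the all-one run, the initial configuration is the translate by the bivalence vector $\mathbf{x}$, the simulated execution is again unique and (since no robot sees global coordinates) is the verbatim translate by $\mathbf{x}$ of the all-zero run, using the \emph{same} activation schedule; so it reaches $L(C)+\mathbf{x}$, which by the definition of \emph{bivalent} lies in a different class of $[\mathcal{F}^1]$, hence $\mathsf{decode}$ returns $1$. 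Combined with agreement (which forces all decisions in a mixed run to agree and to equal one of these two possibilities, since the decided value is always some value submitted to a slot), this establishes validity. Invoking Theorem~\ref{thmImpConsensus} then yields the impossibility, i.e.\ Theorem~\ref{thmImpossibilityPattern}.

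The main obstacle I expect is pinning down the claim ``the reachable legitimate configurations of $\mathcal{B}$ lie in $\mathcal{F}^1$'' and that the redefined $\mathbf{m}(C)$ is well-behaved as a helping location inside the \textsf{getview} subroutine. Concretely, one must check that when all $n$ correct-robot slots in a view are committed, assigning the Byzantine robot the location $\mathbf{m}(C)$ — a missing point of the best-matching pattern in $\mathcal{F}$ — keeps the configuration inside the class reachable by $\mathcal{B}$, and that a legitimate simulated configuration (all correct robots fixed on a subset of some $P\in\mathcal{F}$, Byzantine robot possibly elsewhere) has location configuration in $\mathcal{F}^1$, so that $\mathsf{decode}$ is well-defined on it via $[\mathcal{F}^1]$. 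A secondary subtlety is that, unlike gathering where the target is a single point, here ``legitimate'' must be phrased as ``from some round on, the correct robots' positions form a subset of a fixed element of $\mathcal{F}$ and never change''; one should confirm that the flag $q=\text{TRUE}$ test together with this notion of legitimacy still lets a process safely decide. Beyond these points the argument is a routine transcription of Section~\ref{sec:Correctness}, so I would present it by stating the two redefinitions, citing Lemma~\ref{lmaAdmissibleExecution} unchanged, and then reproducing the validity argument with $\mathbf{x}$ in place of $(1,0)$ and $[\mathcal{F}^1]$-classes in place of ``equal to $\mathbf{x}$''.
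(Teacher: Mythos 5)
Your proposal matches the paper's own treatment: the paper likewise proves Theorem~\ref{thmImpossibilityPattern} by redefining $M(C)$ and $\mathbf{m}(C)$ via the best-matching pattern in $\mathcal{F}$, placing the robots initially at $((0,0),\ldots,(n-1,0))$ versus its translate by the bivalence vector $\mathbf{x}$, defining $\mathsf{decode}$ through the classes of $[\mathcal{F}^1]$, and then invoking the argument of Theorem~\ref{thmImpossibilityGathering} essentially verbatim (the paper in fact gives no more detail than ``we can prove similarly''). The subtleties you flag about $\mathbf{m}(C)$ as a helping location and the $\mathcal{F}^1$-membership of legitimate configurations are exactly the points the paper's redefinitions are designed to handle, so your reconstruction is faithful.
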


\bibliographystyle{abbrv}

\bibliography{gathering-consensus}

\newpage

\appendix

\section{Correctness Proof of Lemma \ref{lmaAdmissibleExecution}}

In the following argument, we use the notation 
$t_j = t^{\mathit{val}(j)}_j$ and $v_j = v^{\mathit{val}(j)}_j$ for short. 
We first show two lemmas that are used in the main proof:

\begin{lemma} \label{lmaValidatorSubmission}
For any $j$, $C_{j+1}[\alpha(j)] = v_j$ holds.
\end{lemma}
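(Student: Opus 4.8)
The plan is to unfold the definition of the simulated configuration $C_{j+1}$ and compute its $\alpha(j)$-th entry directly. Recall that $C_{j+1}$ is the observation $C^{\mathit{val}(j+1)}_{j+1}$ produced by $p_{\mathit{val}(j+1)}$ in the call $\mathsf{getview}_{\mathit{val}(j+1)}(j+1)$, possibly composed with one corrective swap $\pi_\gamma$ for the uncommitted slot of the window $[j+1-n,\, j]$ observed at time $t_{j+1}$. I would first observe that only slot $E[j]$ feeds the $r_{\alpha(j)}$-entry of this observation: the $n$ indices $j+1-n,\dots,j$ have pairwise distinct residues modulo $n$, so $E[j]$ is the only window slot mapped to position $\alpha(j)$, and any swap $\pi_\gamma$ for a slot $k\neq j$ of the window only exchanges positions $\alpha(k)$ and $n$ and hence leaves position $\alpha(j)$ untouched. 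Thus $C_{j+1}[\alpha(j)]$ is determined solely by how the $(j+1)$-th snapshot of $p_{\mathit{val}(j+1)}$ sees $E[j]$, together with the possible swap $\pi_\gamma$ ($\gamma\in\{\alpha(j),n\}$) in the case where $E[j]$ is itself the uncommitted window slot.

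I would then split on whether that snapshot reads $E[j]$ as committed. If it is committed, $\mathsf{getview}$ copies the committed value $v^{c(j)}_j$ into position $\alpha(j)$, no swap can change it, so $C_{j+1}[\alpha(j)]=v^{c(j)}_j$; it remains to see that $j$ is not critical, for then $\mathit{val}(j)=c(j)$ by definition and we are done. To rule out criticality here I would argue by contradiction using the slot-object axioms: if $j$ were critical, $E[j]$ would be uncommitted at $t^{c(j)}_{j+n}$, hence, by Persistency, uncommitted at all earlier times; running the No Contention Commitment / Commitment bookkeeping on slots $j+1,\dots,j+n$ forces $c(j+1)=c(j)$, hence $\mathit{val}(j+1)=c(j)$, and forces $p_{c(j)}$'s $(j+1)$-th snapshot to lie before the commit of $E[j]$ — so it reads $E[j]$ uncommitted, contradicting the hypothesis of this case.

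If instead the snapshot reads $E[j]$ as uncommitted, then $E[j]$ is the unique uncommitted window slot — uniqueness holds because each process submits to its slots in increasing index order, so a slot seen uncommitted must be exactly the one the other process is presently submitting — and the definition sets $C_{j+1}=\pi_\gamma C^{\mathit{val}(j+1)}_{j+1}$ with $\gamma$ chosen so that $C_{j+1}[\alpha(j)]=v^{\mathit{val}(j+1)}_j$. The remaining, essential task is to prove $v^{\mathit{val}(j+1)}_j=v^{\mathit{val}(j)}_j$, i.e.\ that the validators of $j$ and $j+1$ submit the \emph{same} value to $E[j]$. The idea is to follow the chain of critical slots that defines $\mathit{val}(j)$ (via the recursion $\mathit{val}(j)=\mathit{val}(j+n)$ when $j$ is critical) and show, by induction on that chain's length together with the slot-object properties, that at their respective $j$-th main loops $p_{\mathit{val}(j)}$ and $p_{\mathit{val}(j+1)}$ feed the deterministic gathering map $\mathcal{A}_{\alpha(j)}$ identical inputs: every window slot read by either process is committed — at the moment it is read — with the same value for both (so the correct-robot part of the two $j$-th observations coincides), and the entry reserved for the Byzantine robot $r_n$ is set identically in both; since $\mathsf{move}$ is a function this yields equality of the submitted values.

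The main obstacle, as sketched, is precisely this last point: reconciling the two validators' $j$-th observations through the critical-slot recursion and the fine-grained timing guaranteed by Persistency, Commitment, No Contention Commitment and Common Value Commitment. Everything else is a direct unwinding of the definitions of $C_{j+1}$, $\mathsf{getview}$, the swap operators $\pi_k$, and $\mathit{val}$.
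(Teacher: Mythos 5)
Your case analysis is careful and parts of it are genuinely stronger than what the paper writes down: the observation that only slot $E[j]$ can feed position $\alpha(j)$ of the window $[j+1-n,j]$, the uniqueness of the uncommitted window slot, and the contradiction argument showing that a critical $j$ cannot appear committed in the validator-of-$(j+1)$'s snapshot are all correct and worth making explicit. However, there is a genuine gap, and it sits exactly where you place your ``main obstacle.'' In the uncommitted case your reading of the definition yields $C_{j+1}[\alpha(j)] = v^{\mathit{val}(j+1)}_j$, and you then need $v^{\mathit{val}(j+1)}_j = v^{\mathit{val}(j)}_j$. For this you offer only a plan --- ``follow the chain of critical slots \dots\ show by induction \dots\ that the two validators feed $\mathcal{A}_{\alpha(j)}$ identical inputs'' --- with no actual argument. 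That claim is not a routine unwinding: showing that two different processes' $j$-th observations agree entry by entry is essentially the content of the paper's main consistency result (Lemma~\ref{lmaAdmissibleExecution}), whose proof itself \emph{invokes} the present lemma. As written, your induction has no base case, no stated invariant, and a real risk of circularity with the very statement you are proving; asserting that ``the slot-object properties'' will reconcile the two views does not discharge it.

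The paper avoids this entirely because it treats the lemma as definitional: the simulated configuration is \emph{constructed} so that the swap $\pi_\gamma$ forces the $\alpha(k)$-entry of an uncommitted window slot $k$ to equal the designated submitted value, and in the committed case the validator of $j$ coincides with its committer; hence the one-line proof ``trivially from the definition of validators and simulated configurations.'' Under your more literal reading of the superscript in the definition (the validator of the configuration's index rather than of the uncommitted slot), the lemma stops being trivial and requires precisely the reconciliation you could not complete. So either adopt the reading under which the swap pins the entry to $v^{\mathit{val}(j)}_j$ directly --- in which case most of your machinery is unnecessary --- or actually carry out the induction along the critical-slot chain; in its current form the proof does not close.
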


\begin{proof}
This trivially holds from the definition of validators 
and simulated configurations. \qed
\end{proof}

\begin{lemma} \label{lmaValidationOrder}
$t_j < t_{j+1}$.
\end{lemma}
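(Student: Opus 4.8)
The plan is to establish the slightly more general statement that for every slot $j$ of $\mathcal{E}$ and \emph{every} process $p$ that ever enters slot $j+1$ one has $t^{\mathit{val}(j)}_j < t^p_{j+1}$. Lemma~\ref{lmaValidationOrder} is then immediate by taking $p := p_{\mathit{val}(j+1)}$: this process commits some slot $\rho(j+1)\ge j+1$, hence reaches and passes through iteration $j+1$, hence enters slot $j+1$, and the general statement gives $t_j = t^{\mathit{val}(j)}_j < t^{\mathit{val}(j+1)}_{j+1} = t_{j+1}$.

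Before the case analysis I would record a few elementary facts about Algorithm~\ref{algo:CC}. (i) Within a single process the snapshot times are strictly increasing, so $t^p_j < t^p_{j+1} < \cdots$. (ii) If a process $p$ enters slot $j+1$, then in its $j$-th iteration it did not take the decide-and-exit branch, so it reached line~28 and completed an operation $E[j].\mathsf{submit}$ whose begin and end satisfy $t^p_j < b^{(j)}_p$ and $e^{(j)}_p < t^p_{j+1}$. (iii) The committer $p_{c(j)}$ performs $E[j].\mathsf{submit}$ during its own $j$-th iteration, so it enters slot $j$ and only afterwards begins that submit; likewise $p_{\mathit{val}(j)}$ commits some slot $\rho(j)\ge j$, hence reaches iteration $j$, enters slot $j$, and performs $E[j].\mathsf{submit}$ after its $j$-th snapshot. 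I adopt the convention that submit begin/end times inherit the distinct-timestamp assumption, so all the inequalities below are strict.

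Now fix $j$ and $p$ entering slot $j+1$, and set $a := p_{\mathit{val}(j)}$. If $a=p$ the claim follows from (i) and (iii). Otherwise $p$ is the other simulator, and the whole argument reduces to proving the single inequality $b^{(j)}_a < e^{(j)}_p$, since then facts (iii) and (ii) give $t^{\mathit{val}(j)}_j < b^{(j)}_a < e^{(j)}_p < t^p_{j+1}$. I prove $b^{(j)}_a < e^{(j)}_p$ in two sub-cases. \emph{(a)} If $a = p_{c(j)}$: the committed value of $E[j]$ is the one submitted by $a$, while $p$, which differs from $p_{c(j)}=a$, also submitted to $E[j]$ by fact (ii); were $e^{(j)}_p < b^{(j)}_a$, the \textbf{No Contention Commitment} property of \textsf{slot} would force the committed status to equal $p$, contradicting that it is $c(j)=a\ne p$, so $b^{(j)}_a < e^{(j)}_p$. \emph{(b)} If $a \ne p_{c(j)}$: if $j$ were non-critical we would have $a = p_{\mathit{val}(j)} = p_{c(j)}$, so $j$ is critical; hence $p = p_{c(j)}$ (the remaining simulator), $a = p_{\mathit{val}(j)} = p_{\mathit{val}(j+n)}$, and by the definition of a critical slot $E[j]$ is still uncommitted in the $(j+n)$-th snapshot that $p=p_{c(j)}$ takes, i.e.\ at time $t^p_{j+n}$; but $p$'s own submit to $E[j]$ ended at $e^{(j)}_p < t^p_{j+1} < t^p_{j+n}$, so were $e^{(j)}_p < b^{(j)}_a$, \textbf{No Contention Commitment} would make every read performed after $e^{(j)}_p$ --- in particular the read of $E[j]$ taken off that $(j+n)$-th snapshot --- return a committed status, contradicting criticality; thus again $b^{(j)}_a < e^{(j)}_p$.

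The step I expect to be the obstacle is sub-case \emph{(b)}: there $p_{\mathit{val}(j)}$ is not the committer of slot $j$ but the validator of the far-later slot $j+n$, so it is ``running ahead'' of the other process and no naive comparison of the two processes' entry times is available; the argument genuinely has to push the definition of criticality (that $p_{c(j)}$ observed $E[j]$ uncommitted on its $(j+n)$-th snapshot) through the \textsf{slot} object's \textbf{No Contention Commitment} guarantee in order to pin down the relative order of the two \textsf{submit} operations on $E[j]$. The rest is bookkeeping --- verifying that the committer and the validator do reach the iterations in which they touch $E[j]$, and that reading a value out of a process's snapshot is, as far as the \textsf{slot} properties are concerned, the same as a direct read at the snapshot's timestamp.
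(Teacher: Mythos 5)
Your proof is correct and rests on exactly the same two ingredients as the paper's: the \textbf{No Contention Commitment} property of \textsf{slot} pins down the relative order of the two submissions to $E[j]$ unless $j$ is critical, and criticality itself (the slot being observed uncommitted at a $(j+n)$-th snapshot that necessarily comes after the $(j+1)$-th) rules out the reversed ordering. The paper packages this as a single proof by contradiction (assuming $t_j>t_{j+1}$ forces $c(j)=\mathit{val}(j+1)\neq\mathit{val}(j)$, hence $j$ critical, hence a contradiction with $E[j]$ already being committed by $t_{j+1}$), whereas you give a direct, slightly more general argument with an explicit case split on whether $\mathit{val}(j)=c(j)$; the mathematical content is the same.
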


\begin{proof}
Suppose for contradiction that $t_j > t_{j+1}$ holds. 
Then $p_{\mathit{val}(j)} \neq p_{\mathit{val}(j+1)}$ clearly
holds and thus $p_{\mathit{val}(j+1)}$ finishes slot 
$j$ before $p_{(1 - \mathit{val}(j+1))}$ enters $j$. 
It implies that $p_{\mathit{val}(j)}$ is not the committer of $j$
and thus $j$ is critical. However, to make $j$ critical, 
$p_{\mathit{val}(j)}$ must enter $j+n$ at $t_{j+1}$ or earlier
because $j$ has been committed at $t_{j+1}$ or earlier, which 
contradicts $t_j > t_{j+1}$. \qed
\end{proof}

By using these lemmas, we prove Lemma \ref{lmaAdmissibleExecution}.

\begin{proof}
From Lemma \ref{lmaValidatorSubmission}, it suffices to show
that $C_j[x] = C_{j+1}[x]$ holds for any $x \in [0, n] 
\setminus \{\alpha(j), n\}$. Since $C_j[\alpha(k)] = 
C_{j+1}[\alpha(k)]$ holds if slot $k$ has the same
status at $t_j$ and $t_{j+1}$, only the scenario we have to 
consider is that the status of some slot $k \in [j - n + 1, j - 1]$
changes from uncommitted to committed between $t_j$ and $t_{j+1}$ (recall 
$t_j < t_{j+1}$ from Lemma \ref{lmaValidationOrder}).
We show that $C_j[\alpha(k)] = C_{j+1}[\alpha(k)]$ holds in this scenario,
which is sufficient to prove the lemma. Suppose for contradiction
that $C_j[\alpha(k)] \neq C_{j+1}[\alpha(k)]$. From the definition
of simulated configurations, we have $C_j[\alpha(k)] = v^{\mathit{val}(k)}_k$
and $C_{j+1}[\alpha(k)] = v^{c(k)}_k$. Thus, $\mathit{val}(k) \neq c(k)$, which
implies that $k$ is critical. Then $p_{\mathit{val}(k)}$ must stay slot $k$
until $p_{1 - \mathit{val}(k)}$ enter $k+n (> j+1)$. It however contradicts
the fact $k$ is committed at $t_{j+1}$ because $t_{j+1}$ is the 
time when $p_{1 - \mathit{val}(k)}$ commits $j+1$. \qed
\end{proof}

\section{Proof of Theorem \ref{thmImpossibilityGathering}}

The theorem is proved by showing that the algorithm 
{\sf ConsensusToGathering} guarantees the uniform agreement, termination, 
We say that a simulated configuration $C$ is {\em semi-legitimate}
if there exists $\gamma \in [0, n-1]$ such that $\pi_{\gamma}C$ 
is legitimate. Note that any legitimate configuration is semi-legitimate
because of $\pi_{\gamma_n}C = C$. The following lemma takes an important 
role in the following proof.

\begin{lemma} \label{lmaN-1Gathered}
Let $\mathcal{A}$ be an an arbitrary Byzantine gathering algorithm
tolerating up to one fault, and $\mathcal{E} = C'_0, C'_1, C'_2, \cdots 
C'_j$ be an arbitrary 
execution of $\mathcal{A}$ such that $C'_0$ is semi-legitimate and 
$\mathbf{r}_n(C'_k) = \mathbf{m}(C'_k)$ or 
$\mathbf{r}_n(C'_{k - 1})$ holds for any $k$ 
($1 \leq k \leq j$).
 Then, for any configuration $C'_h$ in $\mathcal{E}$,  
$M(C'_h) \geq n$ and $\mathbf{m}(C'_0) = \mathbf{m}(C'_h)$ hold.
\end{lemma}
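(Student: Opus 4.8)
\textbf{Plan for the proof of Lemma~\ref{lmaN-1Gathered}.}
The proof is an induction on the index $h$ of the configuration in $\mathcal{E}$. The base case $h = 0$ requires showing $M(C'_0) \geq n$; this follows from the assumption that $C'_0$ is semi-legitimate, i.e.\ $\pi_\gamma C'_0$ is legitimate for some $\gamma \in [0, n-1]$, and in a legitimate configuration all $n$ correct robots share a common location, so at least $n$ robots sit on $\mathbf{m}(C'_0)$. (The swap $\pi_\gamma$ only exchanges the entry of a correct robot with that of the Byzantine robot $r_n$, so it does not change the multiplicity of the point carrying the correct robots, only possibly which index the Byzantine robot occupies.) For the inductive step, assume $M(C'_{h-1}) \geq n$ and $\mathbf{m}(C'_0) = \mathbf{m}(C'_{h-1})$. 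Since $\mathcal{A}$ is a $1$-Byzantine-resilient gathering algorithm and at least $n$ robots (hence all $n$ correct ones, as there is only one faulty robot) are already co-located at $\mathbf{m}(C'_{h-1})$ at $C'_{h-1}$, the configuration $C'_{h-1}$ is already a legitimate configuration for $\mathcal{A}$ up to the identity of the faulty robot: removing the one robot not on $\mathbf{m}(C'_{h-1})$ leaves $n$ gathered robots. The correctness of $\mathcal{A}$ then forces every correct robot to stay on $\mathbf{m}(C'_{h-1})$ at the next step; the only robot that may move is the (single) robot off that point. By the hypothesis $\mathbf{r}_n(C'_h) = \mathbf{m}(C'_h)$ or $\mathbf{r}_n(C'_{h-1})$, the Byzantine robot either joins the gathering point or stays put, so the multiplicity of $\mathbf{m}(C'_{h-1})$ does not drop: $M(C'_h) \geq n$ and $\mathbf{m}(C'_h) = \mathbf{m}(C'_{h-1}) = \mathbf{m}(C'_0)$.

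The delicate point is arguing that the correctness of $\mathcal{A}$ genuinely pins the correct robots to $\mathbf{m}(C'_{h-1})$ in the \emph{step} from $C'_{h-1}$ to $C'_h$, rather than merely ``eventually.'' The gathering specification says all correct robots eventually reach a point and \emph{never leave it}; the subtlety is that $C'_{h-1}$ is only \emph{one} configuration in $\mathcal{E}$, not the terminal configuration of some execution known to be legitimate. To handle this cleanly I would invoke an indistinguishability argument: if some correct robot $r_i$ moved away from $\mathbf{m}(C'_{h-1})$ when activated at $C'_{h-1}$, then because $r_i$'s local view at $C'_{h-1}$ is determined by the configuration, we could exhibit a \emph{different} legitimate execution of $\mathcal{A}$ — one in which the robot currently off $\mathbf{m}(C'_{h-1})$ is instead treated as the Byzantine robot placed so as to make $C'_{h-1}$ arise after a genuine gathering — where $r_i$'s view is the same, hence $r_i$ moves away there too, contradicting the ``never leave'' clause. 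Here the hypothesis $\mathbf{r}_n(C'_k) \in \{\mathbf{m}(C'_k), \mathbf{r}_n(C'_{k-1})\}$ is exactly what guarantees that the trajectory of the one non-gathered robot is consistent with an admissible Byzantine schedule, so such an alternative execution exists. I expect this indistinguishability construction — carefully choosing which robot plays the Byzantine role and verifying the resulting schedule is admissible under the $n$-bounded centralized scheduler — to be the main obstacle; the multiplicity bookkeeping around $\pi_\gamma$ and the induction itself are routine once that is in place.
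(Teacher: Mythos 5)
Your plan matches the paper's proof: the paper also argues configuration by configuration (establishing the claim for $C'_1$ and then applying the same argument inductively), and its key step is exactly the indistinguishability argument you flag as the main obstacle --- the robots on $\mathbf{m}(C'_0)$ cannot distinguish $C'_0$ from the legitimate configuration $\pi_\gamma C'_0$ in which the astray correct robot $r_\gamma$ is swapped with the Byzantine robot $r_n$, so the ``never leave'' clause of the gathering specification pins them in place, and the hypothesis on $\mathbf{r}_n$ then preserves the multiplicity $n$ at the next configuration. One minor slip: your parenthetical ``hence all $n$ correct ones'' is a non sequitur (the $n$ co-located robots may include $r_n$, with one correct robot astray --- which is precisely the interesting case), but your later treatment of the off-point robot shows you do not actually rely on it.
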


\begin{proof}
We show that $M(C'_1) \geq n$ and $\mathbf{m}(C'_0) = 
\mathbf{m}(C'_1)$ holds. About the following configurations
$C'_2, C'_3 \cdots $, we can inductively apply the same argument as
$C'_1$. If all correct robots are located on $\mathbf{m}(C'_0)$ 
at $C'_0$, $C'_0$ is already legitimate and the lemma clearly holds.
Otherwise, we have $\mathbf{r}_n(C'_0) = \mathbf{m}(C'_0)$ and 
$M(C'_0) = n$. We show that any robot on $\mathbf{m}(C'_0)$ never moves 
during the transition from $C'_0$ to $C'_1$. Let 
$r_\gamma$ be the robot not on $\mathbf{m}(C'_0)$ at $C'_0$. Since
$C'_0$ is semi-legitimate, the configuration $\pi_\gamma C'_0$ is 
a legitimate configuration. Then the robots on $\mathbf{m}(C'_0)$ 
cannot distinguish $C'_0$ and $\pi_\gamma C'_0$, and thus 
they never change their positions. Since we assume that 
$\mathbf{r}_n(C'_1) = \mathbf{m}(C'_1) (= \mathbf{r}_n(C'_0))$, 
we can conclude all robots on $\mathbf{m}(C'_0)$  keep their 
positions at $C'_1$. The lemma is proved. \qed
\end{proof}

\begin{lemma}[Uniform Agreement] \label{lmaAgreement}
If $p_0$ and $p_1$ decide, their decision values are same. 
\end{lemma}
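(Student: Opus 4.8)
The plan is to prove Uniform Agreement by showing that whenever $p_0$ and $p_1$ both decide, they observe the same legitimate configuration (up to the coordinate-system ambiguity absorbed by the swap operators), and hence decode the same value. First I would use the fact that a decision by $p_i$ at loop $j$ requires $q = \textbf{TRUE}$ in $\mathsf{getview}_i(j)$, which by the definition of $\mathsf{getview}$ means all slots in $E'[j-n,\,j-1]$ are committed at the time $p_i$ takes its $j$-th snapshot. Combined with the Persistency property of \textsf{slot} objects, this means that from that point on the committed values of those $n$ slots are fixed, so the configuration $C^i_j$ that $p_i$ constructs coincides with the simulated configuration $C_j$ for any sufficiently long $\mathcal{E}$ (here no swap is needed since there is no uncommitted slot, i.e. $\gamma = n$), and in particular $\mathbf{m}(C^i_j) = \mathbf{m}(C_j)$.

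Next I would invoke Lemma~\ref{lmaAdmissibleExecution}: the simulated configurations $C_n, C_{n+1}, \ldots$ form a genuine execution of the gathering algorithm $\mathcal{A}$ under an $(n-1)$-bounded centralized scheduler, with $r_n$ playing the Byzantine role. Since $\mathcal{A}$ solves Byzantine gathering, once this execution reaches a legitimate configuration it stays legitimate and the gathering point is fixed forever; moreover $\mathbf{m}(C_j)$ equals that common gathering point for every legitimate $C_j$. So if $p_0$ decides at loop $j_0$ and $p_1$ decides at loop $j_1$, both $C_{j_0}$ and $C_{j_1}$ are legitimate configurations of the \emph{same} simulated execution, and therefore $\mathbf{m}(C_{j_0}) = \mathbf{m}(C_{j_1})$. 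Applying $\mathsf{decode}$ to this common point yields the same value, which is exactly Uniform Agreement.

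The subtle point I need to handle carefully is that the criticality and validators of slots — and hence the simulated configuration $C_j$ itself — are defined only relative to a \emph{fixed finite} sequence $\mathcal{E}$, and can change when $\mathcal{E}$ is extended. So I would fix one finite $\mathcal{E}$ long enough to contain both $j_0$ and $j_1$ (plus the extra $n$ slots of lookahead needed to determine their validators) and argue about $C_{j_0}, C_{j_1}$ within that $\mathcal{E}$. Here the $q = \textbf{TRUE}$ condition is what makes this robust: since all slots in $[j-n, j-1]$ are already committed when $p_i$ decides at $j$, none of them is critical (criticality requires being uncommitted at a later snapshot), so $\mathit{val}(k) = c(k)$ for every $k \in [j-n,j-1]$ regardless of how $\mathcal{E}$ is extended, and $C^i_j = C_j$ is well-defined and extension-independent. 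I expect this decoupling of the decision-time configuration from the volatile validator bookkeeping to be the main obstacle, together with one small case check that the gathering point of the simulated execution is indeed well-defined once legitimacy is reached (immediate from the definition of legitimate configuration and the correctness of $\mathcal{A}$).
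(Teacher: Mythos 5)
There is a genuine gap---in fact two. First, your identification $C^i_j = C_j$ fails for the process that decides first. Say $p_0$ decides at slot $j$ and exits: then $p_0$ never submits to $E[j]$, so $c(j)=1$ and the simulated configuration $C_j$ is built from \emph{$p_1$'s} observation $C^1_j$, not from the view $C^0_j$ that $p_0$ actually decoded. Since $p_1$ may have entered slot $j$ while some slot of $[j-n, j-1]$ was still uncommitted, $C^1_j$ can disagree with $C^0_j$ on one correct robot's entry; the paper bridges this by showing that $C^1_j$ is \emph{semi-legitimate} (legitimate after a swap $\pi_l$ exchanging that entry with the Byzantine one), a device absent from your argument. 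Your remark that ``no swap is needed since there is no uncommitted slot'' conflates the decider's snapshot with the validator's.

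Second, and more importantly, your claim that ``once this execution reaches a legitimate configuration it stays legitimate \dots immediate from the definition \dots and the correctness of $\mathcal{A}$'' is not immediate; it is the actual crux. A process can only test the static property that all $n$ correct robots coincide in its view, and the gathering specification does not promise that an algorithm keeps robots together in \emph{every} configuration where they happen to coincide---only that some such configuration is eventually reached and kept. The paper closes this gap with Lemma \ref{lmaN-1Gathered}: because the simulated Byzantine robot is steered to sit at $\mathbf{m}(C)$ (or to retain its previous position when a slot stays uncommitted), the $n$ robots located at $\mathbf{m}(C)$ cannot distinguish the actual configuration from a genuinely legitimate one in which the single stray robot is the Byzantine robot, and hence they never move; one must additionally verify the hypothesis on $\mathbf{r}_n$ over the slots $[j+1, j+h]$, which the paper does by a case analysis on committed versus persistently uncommitted slots. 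Your proposal uses neither the semi-legitimacy device nor this indistinguishability argument, so the step $\mathbf{m}(C_{j_0}) = \mathbf{m}(C_{j_1})$ is unsupported. (Your point about fixing a finite $\mathcal{E}$ and the non-criticality of already-committed slots is sound, but it addresses only the easy part of the proof.)
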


\begin{proof}
Without loss of generality, we assume that $p_0$ decides at a 
slot earlier than or equal to $p_1$'s . Let $j$ and $j+h$ be the slots 
where $p_0$ and $p_1$ decide respectively. 
Since $p_0$ exits at the beginning of slot $j$, all slots
$[j, j+h]$ are solely committed by $p_1$. From the definition,
those slots are not critical. It follows that 
$\mathbf{m}(C^1_k) = \mathbf{m}(C_k)$ holds for any $k \in [j, j+h]$.
In addition, we can obtain $\mathbf{m}(C^1_j) = \mathbf{m}(C_j)$ 
because all slots of $[j - n, j - 1]$ are committed at the 
construction of $C^1_j$ (from the definition of simulate configuration
$C_j$, each entry $C_j[x]$ for any $x \in [0, n]$ is the committed
value of the corresponding slot in $[j -n, j -1]$). 
Assume that $C_j$ is semi-legitimate and $\mathbf{r}_n(C_k) = 
\mathbf{m}(C_k)$ or $\mathbf{r}_n(C_{k - 1})$ 
holds for any $k \in [j+1, j+h]$. Then, from
Lemma \ref{lmaN-1Gathered}, we can conclude $\mathbf{m}(C^0_j) 
= \mathbf{m}(C_j) = \mathbf{m}(C_{j+h}) = \mathbf{m}(C^1_{j+h})$ 
and thus the lemma holds. 

The rest of the proof is to show that those assumptions hold. 
We first show that the first one holds: Since
$C^0_j$ is legitimate and all slots of $[j - n, j -1]$ are committed at
$t^0_j$, $C^1_j$ is legitimate if those slots are already committed at
$t^1_j$. Otherwise, letting $l$ be the slot in $[j - n, j - 1]$ 
uncommitted at $t^1_j$, $\pi_{l}C^1_j$ or $C^1_j$ is legitimate, which
implies that $C^1_j$ is semi-legitimate.

Next we look at the second assumption. We give the proof for 
the case of $k = j+1$. 
For any following slot $k > j+1$, we can inductively prove the assumption 
in the same way as that case.
Consider the following two cases: 1) If all slots in $[k - n, k - 1]$ 
are committed at $t^1_k$, we obviously have $\mathbf{r}_n(C^1_{k}) 
= \mathbf{m}(C^1_{k})$. 2)A slot $l \in [k - n, k - 1]$ is uncommitted
at $t^1_k$, $l$ is uncommitted during $[t^1_j, t^1_k]$ because any 
slot processed after $t^1_j$ is immediately committed when $p_1$ finishes it. 
This implies that the status of $l$ is the same at $t^1_k$
and $t^1_{k - 1}$ and thus $\mathbf{r}_n(C^1_{k}) = \mathbf{r}_n(C^1_{k - 1})$
holds. The lemma is proved. \qed
\end{proof}

\begin{lemma}[Termination] \label{lmaTermination}
Each process $p_i$ eventually decides unless it crashes. 
\end{lemma}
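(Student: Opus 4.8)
The plan is to show that the only way a process $p_i$ fails to decide is if it loops forever without ever reaching a legitimate configuration with $q = \mathrm{TRUE}$, and then derive a contradiction with the correctness of the gathering algorithm $\mathcal{A}$. First I would dispose of the trivial source of non-termination: each pass of the main loop consists of finitely many \textsf{read} and \textsf{submit} operations on \textsf{slot} objects plus local computation, and (as a single simulator process running alone after a possible crash of the other) these all complete in finite time. Hence if $p_i$ does not decide, it executes infinitely many main-loop iterations, producing submissions to $E[n], E[n+1], E[n+2], \ldots$ without bound; in particular the index $u$ grows without bound.

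Next I would invoke Lemma~\ref{lmaAdmissibleExecution} to assemble, from the infinite sequence of slots actually produced, an infinite simulated execution $C_n \trans{\alpha(n)} C_{n+1} \trans{\alpha(n+1)} \cdots$ of $\mathcal{A}$. The crucial point is that this is an \emph{admissible} execution for the robot model: the activation schedule is round-robin over the $n$ correct robots $r_0, \ldots, r_{n-1}$ (so it is a centralized, fair schedule, and $(n-1)$-bounded), and each step $C_j \trans{\alpha(j)} C_{j+1}$ is realized by some legal Byzantine movement of $r_n$ — this is exactly what the notation $\trans{x}$ encodes, and the \textsf{getview} construction guarantees that the location placed in $C_j[n]$ (either the conflicting submitted value $v_{1-i}$, or the helping point $\mathbf{m}(C)$) is a value the Byzantine robot is allowed to move to. Since $\mathcal{A}$ tolerates one Byzantine fault, every such execution must eventually reach and stay at a legitimate configuration: there is a slot $j^\star$ such that $C_{j^\star}$ is legitimate and all $C_{j'}$ for $j' > j^\star$ keep the correct robots gathered.

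The main obstacle — and the heart of the proof — is bridging this ``eventual legitimacy of the $C_j$'s'' with the \emph{detection} condition in the algorithm, namely that $p_i$ must at some iteration obtain $q = \mathrm{TRUE}$ \emph{and} its locally constructed $C^i_u$ (not the retrospective $C_u$) is legitimate. For $q = \mathrm{TRUE}$ I would argue that once $p_i$ is the unique surviving process, every slot it finishes is immediately committed (no contention), so after at most $n$ further iterations past the crash point all of the last $n$ slots seen in a snapshot are committed; hence $q = \mathrm{TRUE}$ from then on. When $q = \mathrm{TRUE}$, all of $E'[u-n..u-1]$ are committed, so $C^i_u$ equals the simulated configuration $C_u$ with $C_u[n] = \mathbf{m}(C_u)$; combining this with Lemma~\ref{lmaN-1Gathered} (applied starting from the first legitimate $C_{j^\star}$, whose hypotheses — semi-legitimacy and $\mathbf{r}_n(C_k) \in \{\mathbf{m}(C_k), \mathbf{r}_n(C_{k-1})\}$ — hold because the helping point is $\mathbf{m}$ throughout the committed tail) gives that $C^i_u$ is genuinely legitimate for all large enough $u$. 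Therefore $p_i$ satisfies the exit condition and decides. I would also need to handle the boundary case where $p_i$ is the crashed process — then it simply does not decide, which the lemma statement explicitly excludes via ``unless it crashes'' — and the case where neither process crashes but they run concurrently forever: here I would observe that at least one of the two processes sees all-committed slots infinitely often (the committer of the highest-index finished slot always does), and run the same argument for that process, after which the agreement lemma transfers the decision to the other. The delicate bookkeeping is exactly the interplay between the ``local'' views $C^i_u$ and the ``global'' retrospective configurations $C_u$, which is why Lemmas~\ref{lmaAdmissibleExecution} and \ref{lmaN-1Gathered} are set up precisely to make $C^i_u = C_u$ on the committed tail.
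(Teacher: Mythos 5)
Your core argument --- extract an infinite admissible simulated execution via Lemma~\ref{lmaAdmissibleExecution}, invoke the correctness of the gathering algorithm to force eventual legitimacy of the simulated configurations, and then show that the exit condition ($q = \mathrm{TRUE}$ together with a legitimate local view $C^i_u$) is eventually met because a lone surviving process commits every slot it finishes --- is essentially the paper's proof of the first half of the lemma, namely that \emph{at least one} process decides. The bridging step via Lemma~\ref{lmaN-1Gathered}, relating the local views $C^i_u$ to the retrospective configurations $C_u$ on a fully committed window, is likewise the device the paper relies on.

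The gap is in your final case, where neither process crashes and both run concurrently. You conclude that once one process decides, ``the agreement lemma transfers the decision to the other.'' Lemma~\ref{lmaAgreement} cannot do that: it is a purely conditional statement about decision \emph{values} (\emph{if} both processes decide, the values coincide) and gives no information about whether the second process ever satisfies its exit test, so it cannot yield termination. What is needed --- and what the paper supplies --- is the observation that a process that decides and exits stops taking steps and is therefore indistinguishable, from the other process's viewpoint, from a process that crashes at that slot. This reduces the two-survivors situation to the one-survivor situation you already handled, so the ``at least one process decides'' argument re-applies to the remaining process running alone and forces it to decide as well. With that substitution in place of the appeal to agreement, your proof coincides with the paper's.
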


\begin{proof}
We first show that at least one of $p_0$ and $p_1$ eventually 
decides. Lemma \ref{lmaAdmissibleExecution} implies that we can 
have an admissible execution $\mathcal{E} = C_0, C_1, \cdots C_j$ for 
sufficiently large $j$ where the last $n$ slots are eventually 
committed (that is, no process crashes during the processing of 
those slots) and the simulated configurations $C_{j - n + 1}, 
C_{j - n + 2}, 
\cdots, C_{j}$ corresponding to them are converged to 
a legitimate one. Let $\mathbf{m}$ be the point of gathering,
that is, $\mathbf{m} = \mathbf{m}(C_{j - n + 1}) =  
\mathbf{m}(C_{j - n + 2}) =, \cdots = \mathbf{m}(C_j)$.
From the definition of simulated executions the last $n$ configurations 
are the observation results by the committer of each slot.
Since $C_k$ for any $k \in [j - n+1, j]$ is legitimate, its committed value 
is $\mathbf{m}$. Let $p_i$ be the process such that no process enters 
$j+1$ after $p_i$ does. Since all slots in $[j - n + 1, j]$ have 
been committed with $\mathbf{m}$ when $p_i$ enters $j+1$, 
$p_i$ decides at $j+1$. 

The remaining part of the proof is to show
that the decision of $p_i$ implies that of $p_{1 - i}$. Since 
$p_i$ exits the algorithm at the beginning of $j+1$, its behavior from
the viewpoint of $p_{1 - i}$ is equivalent to the crash at $j+1$. 
If $p_i$ crashes, $p_{1 - i}$ necessarily decides because we have 
already proved that at least one process eventually decides. Consequently,
we can conclude that $p_{1 - i}$ eventually decides after the decision of 
$p_i$. \qed
\end{proof}

\section{Examples of Bivalent Formation Problems}

We show that two well-known formation problems, circle and line,
belongs to the class of bivalent formation problems.

\begin{example}
The {\em circle formation} is the problem requiring that all robots
are placed on different locations on the boundary of a common
circle\footnote{Exactly, we consider the {\em non-uniform} circle 
formation problem. A stronger variant is {\em uniform} circle formation, 
which must guarantees that all robots
are placed evenly on the boundary of a common circle.}. The 
specification $\mathcal{F}_{\mathrm{circle}}$
of this problem can be stated as follows: $\mathcal{F}_{\mathrm{circle}}
= \{ \{ \mathbf{v}_0, \mathbf{v}_1, \cdots \mathbf{v}_{n+f} \}| 
\forall \mathbf{v_i}, \mathbf{v_j} : \mathbf{v}_i \neq \mathbf{v}_j \wedge 
\exists \mathbf{c}, r \in \Real : \forall \mathbf{v}_i : 
|\mathbf{v}_i - \mathbf{c}| = r \}$.
\end{example}

\begin{example}
The {\em line formation} is the problem requiring that all robots
are placed on different locations on a common line, which can be 
specified as $\mathcal{F}_{\mathrm{line}} = \{ \{ \mathbf{v}_0, \mathbf{v}_1, 
\cdots \mathbf{v}_{n+f} \} | 
\forall \mathbf{v_i}, \mathbf{v_j} : \mathbf{v}_i \neq \mathbf{v}_j \wedge
\exists a_2, a_3, \cdots, a_{n+f} \in \Real : 
\forall i \in [2, n+f] : \mathbf{v}_i - \mathbf{v}_0 
= a_i (\mathbf{v}_1 - \mathbf{v}_0) \}$.
\end{example}

\begin{theorem}
If $n + 1 > 4$, the circle formation and the line formation 
are bivalent.
\end{theorem}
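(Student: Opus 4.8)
The plan is to show that both $\mathcal{F}_{\mathrm{circle}}$ and $\mathcal{F}_{\mathrm{line}}$ satisfy the bivalence condition: for every pattern $P$ in the family we must exhibit a translation vector $\mathbf{x}_P$ such that $P$ and $P + \mathbf{x}_P$ lie in different classes of $[\mathcal{F}^1]$. Recall that two location configurations are in the same class of $[\mathcal{F}^1]$ only if they are connected by a chain of $1$-neighborhood steps, each of which keeps at least $n$ of the $n+1$ points fixed. The key observation I would exploit is a \emph{rigidity} fact: when $n+1 > 4$, any $n$ points of a valid circle (resp. line) configuration already determine the underlying circle (resp. line) uniquely, since $n \geq 4 > 3$ points determine a circle and $n \geq 4 > 2$ points determine a line. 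Hence a single $1$-neighborhood step that starts from a configuration lying on circle $\mathcal{C}$ (resp. line $\mathcal{L}$) and lands on a valid configuration must land on a configuration that still lies on the \emph{same} $\mathcal{C}$ (resp. $\mathcal{L}$): the $n$ retained points pin down the circle/line, and the one moved point, to keep the configuration valid, must also lie on it. By induction along a chain, every configuration in the same $[\mathcal{F}^1]$-class as a configuration on $\mathcal{C}$ (resp. $\mathcal{L}$) has \emph{all but at most one} of its points on $\mathcal{C}$ (resp. $\mathcal{L}$) — more carefully, I would track the invariant "at least $n$ points lie on the fixed circle/line" and show it is preserved by each neighborhood step.

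With this rigidity lemma in hand, the construction of $\mathbf{x}_P$ is straightforward. Given $P \in \mathcal{F}_{\mathrm{circle}}$ lying on circle $\mathcal{C}$ with center $\mathbf{c}$ and radius $r$, choose $\mathbf{x}_P$ to be any vector with $|\mathbf{x}_P| > 2r$ (for instance $\mathbf{x}_P = (2r+1, 0)$). Then $P + \mathbf{x}_P$ lies on the translated circle $\mathcal{C} + \mathbf{x}_P$, which is disjoint from $\mathcal{C}$. If $P$ and $P + \mathbf{x}_P$ were in the same class, the invariant would force $P + \mathbf{x}_P$ to have at least $n \geq 4$ points on $\mathcal{C}$; but all its points lie on the disjoint circle $\mathcal{C} + \mathbf{x}_P$, and $\mathcal{C} \cap (\mathcal{C} + \mathbf{x}_P) = \emptyset$, a contradiction. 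The line case is identical: for $P \in \mathcal{F}_{\mathrm{line}}$ on line $\mathcal{L}$, pick $\mathbf{x}_P$ to be any vector not parallel to $\mathcal{L}$, so that $\mathcal{L}$ and $\mathcal{L} + \mathbf{x}_P$ intersect in at most one point; then $P + \mathbf{x}_P$ cannot have $n \geq 4 > 1$ points on $\mathcal{L}$, so it is in a different class.

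The main obstacle is proving the rigidity/invariant lemma cleanly, i.e. that "at least $n$ points on the fixed circle (resp. line)" is genuinely preserved by a single $1$-neighborhood step when $n+1>4$. The subtlety is that a neighborhood step keeps $n$ of the $n+1$ \emph{points} fixed but these need not be the same $n$ points that were certified to be on $\mathcal{C}$; one has to argue that among the $n$ retained points at least $n-1 \geq 3$ were on $\mathcal{C}$ (true since at most one of the $n+1$ points was off $\mathcal{C}$), so $\mathcal{C}$ is still the unique circle through $\geq 3$ non-collinear retained points — here I would need the circle configuration's points to be non-collinear, which follows from $n+1 \geq 4$ points being distinct and on a circle (a circle contains no $3$ collinear points, so certainly not $\geq 3$). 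For the line, $\geq 2$ distinct retained points pin $\mathcal{L}$. Once the unique underlying circle/line is fixed, validity of the post-step configuration forces the moved point onto it as well, restoring the count to $n+1 \geq n$. I would then close by induction on chain length and translate back to the bivalence statement as above; the threshold $n+1>4$ is exactly what guarantees $n-1 \geq 3$ retained on-curve points for the circle case (and is more than enough for the line).
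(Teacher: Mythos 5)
Your proposal is correct and follows essentially the same route as the paper's proof: both hinge on the rigidity fact that, since $n>3$, the $\geq n$ points shared across a $1$-neighborhood step pin down the underlying circle (resp.\ line), so $\mathit{cir}(\cdot)$ is invariant along any $\sim$-chain, while a suitable translation produces a different circle (resp.\ line) and hence a different class. The paper phrases this as ``locate the first step $h$ where $\mathit{cir}(P_h)\neq\mathit{cir}(P_{h+1})$ and contradict it with the bound on circle intersections,'' whereas you carry the invariant globally to the endpoint and pick $\mathbf{x}_P$ explicitly; these are the same argument, and you are in fact more careful than the paper about the point that the retained $n$ points need not all lie on the certified circle.
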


\begin{proof}
We only show the proof for the circle formation because the 
bivalency of the line formation can be proved in the same way as 
the circle. From the definition of the circle formation problem, we can 
associate the circle containing at least $n$ robots to each pattern 
$P \in \mathcal{F}_{\mathrm{circle}}^1$, which are denoted by 
$\mathit{cir}(P)$. Let $P$ be an arbitrary pattern 
in $\mathcal{F}_{\mathrm{circle}}$. 
To prove the theorem, it suffices to show that there exists 
a vector $\mathbf{x}$ such that $P' = P + \mathbf{x}$ satisfies 
$P' \not\sim P$. Suppose for contradiction
that $P \sim P'$ holds for any $\mathbf{x}$. Since $P \sim P'$ 
holds, we have a chain $P = P_0 \sim P_1 \sim P_2 \cdots \sim P_k = P'$ 
where $P_i \in \mathcal{F}_{\mathrm{circle}}^1$ and 
$|P_i \cap P_{i+1}| \geq n$ hold for any $i$ ($0 \leq i \leq k$). 
Since $\mathit{cir}(P_0) \neq \mathit{cir}(P_k)$ clearly holds, 
there necessarily exists $h$ satisfying $\mathit{cir}(P_h) 
\neq \mathit{cir}(P_{h+1})$. However, it contradicts the fact that 
$|P_h \cap P_{h+1}| \geq n > 3$ because at most three robots can 
be placed on the intersection of $\mathit{cir}(P_h)$ and 
$\mathit{cir}(P_{h+1})$ in $P_h$ and $P_{h+1}$ (recall that the 
circle formation requires that all correct robots must be 
located on different positions in legitimate configurations).
\qed
\end{proof}

We further present a formation problem that is not bivalent.

\begin{example}
The {\em $2$-gathering} is the problem requiring that all robots are
placed on at most two locations. It is specified as 
$\mathcal{F}_{\mathrm{2gat}} = \{ \mathbf{v}_0, \mathbf{v}_1, 
\cdots \mathbf{v}_{n+f} | \exists \mathbf{x}_0, \mathbf{x}_1 :
\forall i \in [0, n+f] : \mathbf{v}_i = \mathbf{x}_0 \vee \mathbf{v}_i
= \mathbf{x}_1 \}$.
\end{example}

\begin{theorem}
The $2$-gathering problem is not bivalent.
\end{theorem}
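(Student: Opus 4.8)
The plan is to negate bivalency directly, by exhibiting a single pattern $P \in \mathcal{F}_{\mathrm{2gat}}$ such that $P$ and $P + \mathbf{x}$ lie in the same class of $[\mathcal{F}_{\mathrm{2gat}}^1]$ for \emph{every} vector $\mathbf{x}$. The natural choice is the fully gathered pattern $P$ in which all robots share the location $\mathbf{0}$; this belongs to $\mathcal{F}_{\mathrm{2gat}}$ since all of its points lie in a two-element set (take $\mathbf{x}_0 = \mathbf{x}_1 = \mathbf{0}$). Fixing an arbitrary $\mathbf{x}$, it then suffices to prove $P \sim P + \mathbf{x}$.

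To do so I would construct an explicit chain that relocates the robots to $\mathbf{x}$ one at a time. For $k = 0, 1, \ldots$, let $P_k$ be the multiset obtained from $P$ by replacing exactly $k$ of the occupied copies of $\mathbf{0}$ with copies of $\mathbf{x}$, so that $P_0 = P$ and the last pattern in the sequence is $P + \mathbf{x}$. Two facts close the argument: (i) each $P_k$ still belongs to $\mathcal{F}_{\mathrm{2gat}} \subseteq \mathcal{F}_{\mathrm{2gat}}^1$, because every one of its points lies in $\{\mathbf{0}, \mathbf{x}\}$; and (ii) $P_k$ and $P_{k+1}$ are $1$-neighbors, since as multisets they coincide in all but a single entry, whence $|P_k \cap P_{k+1}| \geq n$. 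Hence $P_0 \sim P_1 \sim \cdots \sim P + \mathbf{x}$, so $P \sim P + \mathbf{x}$; as $\mathbf{x}$ was arbitrary, $P$ witnesses the failure of bivalency and $\mathcal{F}_{\mathrm{2gat}}$ is not bivalent.

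I do not anticipate a real obstacle; the only points needing care are that the intermediate patterns $P_k$ are genuinely admissible elements of $\mathcal{F}_{\mathrm{2gat}}^1$ (so that the chain stays inside a single class) and that the bound $|P_k \cap P_{k+1}| \geq n$ is computed with multiplicities. If a stronger conclusion is desired, essentially the same one-robot-at-a-time argument shows that $[\mathcal{F}_{\mathrm{2gat}}^1]$ is a \emph{single} equivalence class: first collapse any $Q \in \mathcal{F}_{\mathrm{2gat}}$ onto one of its two occupied points, then translate that point to $\mathbf{0}$ exactly as above. In particular no pattern at all can witness bivalency, which is an even cleaner route to the theorem.
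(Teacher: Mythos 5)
Your proof is correct and uses essentially the same mechanism as the paper's: a one-robot-at-a-time replacement chain whose intermediate patterns all remain in $\mathcal{F}_{\mathrm{2gat}}$ and whose consecutive members share $n$ entries as multisets. The paper phrases this as showing $[\mathcal{F}_{\mathrm{2gat}}^1]$ is a single equivalence class (via a chain $P \sim Q \sim Q' \sim P'$ through fully gathered patterns), which is precisely the stronger variant you sketch at the end, so the two arguments coincide in substance.
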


\begin{proof}
We show the theorem by showing that $[\mathcal{F}_{\mathrm{2gat}}]$
consists of a single class. That is, for any two patterns 
$P, P' \in \mathcal{F}_{\mathrm{2gat}}$, we prove $P \sim P'$.
Let $\{ \mathbf{p}_0, \mathbf{p}_1\}$, and 
$\{\mathbf{p}'_0$, $\mathbf{p}'_1\}$ be the set of locations constituting
$P$ and $P'$ respectively. Taking two patterns $Q$ and $Q'$ in 
$\mathcal{F}_{\mathrm{2gat}}$ where all robots are placed in $\mathbf{p}_0$
and $\mathbf{p}'_0$, we can easily show that $P \sim Q \sim Q' \sim P'$
holds: One-by-one replacement of all robots at $\mathbf{p}_1$ to $\mathbf{p}_0$
transforms $P$ into $Q$, which implies $P \sim Q$. The relations
$Q \sim Q'$ and $Q' \sim P$ can be obtained similarly. Consequently,
we have $P \sim P'$. \qed
\end{proof}

Because of the above theorem, we cannot prove the impossibility of 
the $2$-gathering problem from our reduction. We conjecture that the 
$2$-gathering is solvable if we assume that $f=1$ and the agreement of
the orientations of local coordinate systems.

\end{document}